%
%

\documentclass[10pt]{amsart}
\usepackage{amssymb,MnSymbol}
\usepackage{amsthm,amsmath}

\setcounter{MaxMatrixCols}{10}

\title{Locally monotone Boolean and pseudo-Boolean functions}

\author{Miguel Couceiro}
\address{Mathematics Research Unit, FSTC, University of Luxembourg, 6, rue Coudenhove-Kalergi, L-1359 Luxembourg, Luxembourg} \email{miguel.couceiro[at]uni.lu }

\author{Jean-Luc Marichal}
\address{Mathematics Research Unit, FSTC, University of Luxembourg, 6, rue Coudenhove-Kalergi, L-1359 Luxembourg, Luxembourg} \email{jean-luc.marichal[at]uni.lu }

\author{Tam\'as Waldhauser}
\address{Mathematics Research Unit, FSTC, University of Luxembourg, 6, rue Coudenhove-Kalergi, L-1359 Luxembourg, Luxembourg and Bolyai Institute, University of Szeged, Aradi v\'ertan\'uk tere 1, H-6720 Szeged, Hungary}
\email{twaldha[at]math.u-szeged.hu }

\date{March 27, 2012}

\begin{document}

\theoremstyle{plain}
\newtheorem{theorem}{Theorem}
\newtheorem{lemma}[theorem]{Lemma}
\newtheorem{proposition}[theorem]{Proposition}
\newtheorem{corollary}[theorem]{Corollary}
\newtheorem{fact}[theorem]{Fact}
\newtheorem*{main}{Main Theorem}
\newtheorem{problem}{Problem}

\theoremstyle{definition}
\newtheorem{definition}[theorem]{Definition}
\newtheorem{example}[theorem]{Example}

\theoremstyle{remark}
\newtheorem*{conjecture}{\indent Conjecture}
\newtheorem{remark}{Remark}
\newtheorem{claim}{Claim}

\newcommand{\N}{\mathbb{N}}
\newcommand{\R}{\mathbb{R}}
\newcommand{\B}{\mathbb{B}}
\newcommand{\bfa}{\mathbf{a}}
\newcommand{\bfx}{\mathbf{x}}
\newcommand{\bfy}{\mathbf{y}}
\newcommand{\bfz}{\mathbf{z}}
\newcommand{\med}{\mathrm{med}}
\newcommand{\sign}{\mathrm{sign}}

\begin{abstract}
We propose local versions of monotonicity for Boolean and pseudo-Boolean functions: say that a pseudo-Boolean (Boolean) function is $p$-locally
monotone if none of its partial derivatives changes in sign on tuples which differ in less than $p$ positions. As it turns out, this
parameterized notion provides a hierarchy of monotonicities for pseudo-Boolean (Boolean) functions.

Local monotonicities are shown to be tightly related to lattice counterparts of classical partial derivatives via the notion of permutable
derivatives. More precisely, $p$-locally monotone functions are shown to have $p$-permutable lattice derivatives and, in the case of symmetric
functions, these two notions coincide. We provide further results relating these two notions, and present a classification of $p$-locally
monotone functions, as well as of functions having $p$-permutable derivatives, in terms of certain forbidden ``sections'', i.e., functions which
can be obtained by substituting constants for variables. This description is made explicit in the special case when $p=2$.
\end{abstract}

\keywords{Boolean function, pseudo-Boolean function, local monotonicity, discrete partial derivative, join and meet derivatives}

\subjclass[2010]{06E30, 94C10}

\maketitle

\section{Introduction}

Throughout this paper, let $[n]=\{1,\ldots,n\}$ and $\B=\{0,1\}$. We are interested in the so-called Boolean functions $f\colon\B^n\to\B$ and
pseudo-Boolean functions $f\colon\B^n\to\R$, where $n$ denotes the arity of $f$. The pointwise ordering of functions is denoted by $\leq $,
i.e., $f\leq g$ means that $f(\bfx)\leq g(\bfx)$ for all $\bfx\in\B^n$. The negation of $x\in\B$ is defined by $\overline{x}=x\oplus 1$, where
$\oplus$ stands for addition modulo $2$. For $x,y\in\B$, we set $x\wedge y=\min(x,y)$ and $x\vee y=\max(x,y)$.

For $k\in [n]$, $\bfx\in\B^n$, and $a\in\B$, let $\bfx_{k}^{a}$ be the tuple in $\B^{n}$ whose $i$-th component is $a$, if $i=k$, and $x_i$,
otherwise. We use the shorthand notation $\bfx_{jk}^{ab}$ for $(\bfx_j^a)_k^b=(\bfx_k^b)_j^a$. More generally, for $S\subseteq [n]$,
$\bfa\in\B^n$, and $\bfx\in\B^S$, let $\bfa_S^{\bfx}$ be the tuple in $\B^n$ whose $i$-th component is $x_i$, if $i\in S$, and $a_i$, otherwise.

Let $i\in [n]$ and $f\colon\B^n\to\R$. A variable $x_i$ is said to be \emph{essential} in $f$, or that $f$ \emph{depends} on $x_i$, if there
exists $\bfa\in\B^n$ such that $f(\bfa^0_i)\neq f(\bfa^1_i)$. Otherwise, $x_i$ is said to be \emph{inessential} in $f$. Let $S\subseteq [n]$ and
$f\colon \B^n\to\R$. We say that $g\colon \mathbb{B}^{S}\rightarrow \mathbb{R}$ is an \emph{$S$-section} of $f$ if there exists $\bfa\in\B^n$ such that
$g(\bfx)=f(\bfa_S^{\bfx})$ for all $\bfx\in\B^S$. By a \emph{section} of $f$ we
mean an $S$-section of $f$ for some $S\subseteq [n]$, i.e., any function which can be obtained from $f$ by replacing some of its variables by
constants.

The \emph{(discrete) partial derivative} of $f\colon\B^n\to\R$ with respect to its $k$-th variable is the function $\Delta _{k}f\colon
\B^n\to\R$ defined by $\Delta_kf(\bfx)=f(\bfx_k^1)-f(\bfx_k^0)$; see \cite{FolHam05,GraMarRou00}. Note that $\Delta_kf$ does not depend on its
$k$-th variable, hence it could be regarded as a function of arity $n-1$, but for notational convenience we define it as an $n$-ary function.

A pseudo-Boolean function $f\colon\B^n\to\R$ can always be represented by a multilinear polynomial of degree at most $n$ (see \cite{HamRud68}),
that is,
\begin{equation}\label{eq:PBF}
f(\bfx) ~=~ \sum_{S\subseteq [n]}a_S\,\prod_{i\in S}x_i\, ,
\end{equation}
where $a_S\in\R$. For instance, the multilinear expression for a binary pseudo-Boolean function is given by
\begin{equation}\label{eq:PBF22}
a_0+a_1\, x_1+a_2\, x_2+a_{12}\, x_1x_2\, .
\end{equation}
This representation is very convenient for computing the partial derivatives of $f$. Indeed, $\Delta_{k}f$ can be obtained by applying the
corresponding formal derivative to the multilinear representation of $f$. Thus, from (\ref{eq:PBF}), we immediately obtain
\begin{equation}\label{eq:PBFdiff}
\Delta_{k}f(\bfx) ~=~ \sum_{S\ni k}a_S\,\prod_{i\in S\setminus\{k\}}x_i\, .
\end{equation}

We say that $f$ is \emph{isotone} (resp.\ \emph{antitone}) \emph{in its $k$-th variable} if $\Delta_kf(\bfx)\geq 0$ (resp.\ $\Delta_kf(\bfx)\leq
0$) for all $\bfx\in\B^n$. If $f$ is either isotone or antitone in its $k$-th variable, then we say that $f$ is \emph{monotone in its $k$-th
variable}. If $f$ is isotone (resp.\ antitone, monotone) in all of its variables, then $f$ is an \emph{isotone} (resp.\ \emph{antitone},
\emph{monotone}) \emph{function}.\footnote{Note that the terms ``positive'' and ``nondecreasing'' (resp.\ ``negative'' and ``nonincreasing'')
are often used instead of isotone (resp.\ antitone), and it is also customary to use the word ``monotone'' only for isotone functions.} It is
clear that any section of an isotone (resp.\ antitone, monotone) function is also isotone (resp.\ antitone, monotone).
Thus defined, a function $f\colon\mathbb{B}^{n}\rightarrow \mathbb{R}$ is monotone if and only if none of its partial derivatives changes in sign on $\B^n$.

Noteworthy examples of monotone functions include the so-called \emph{pseudo-polyno{\-}mial functions} \cite{CW,CW1} which play an important role, for instance, in the qualitative approach to decision making; for general background see, e.g., \cite{BouDubPraPir09,DPS}. In the current setting, pseudo-polynomial functions can be thought of as compositions $p\circ(\varphi_1,\ldots,\varphi_n)$ of (lattice) polynomial functions $p\colon [a,b]^n\to [a,b]$, $a<b$, with unary functions $\varphi_i\colon \B \to [a,b]$, $i\in [n]$. Interestingly, pseudo-polynomial functions $f\colon \B^n\to\R$ coincide exactly with those pseudo-Boolean functions that are monotone.

\begin{theorem}
A pseudo-Boolean function is monotone if and only if it is a pseudo-polynomial function.
\end{theorem}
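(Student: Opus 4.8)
The plan is to prove the two implications separately. The direction from pseudo-polynomial to monotone is immediate: every lattice polynomial function $p\colon[a,b]^n\to[a,b]$ is isotone (being built from projections and constants by $\wedge$ and $\vee$, all of which are isotone), and since $\B$ is a two-element chain each unary map $\varphi_i\colon\B\to[a,b]$ is automatically either isotone or antitone. Hence in the composition $f=p\circ(\varphi_1,\ldots,\varphi_n)$ the function $f$ is isotone in $x_i$ when $\varphi_i$ is isotone and antitone in $x_i$ when $\varphi_i$ is antitone; in either case $\Delta_if$ does not change sign, so $f$ is monotone.

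For the converse, suppose $f\colon\B^n\to\R$ is monotone. First I would reduce to the isotone case. Letting $A\subseteq[n]$ collect the variables in which $f$ is antitone, the function $g(\bfx)=f(\bfz)$, where $z_k=\overline{x_k}$ for $k\in A$ and $z_k=x_k$ otherwise, is isotone. A pseudo-polynomial representation $g=p\circ(\psi_1,\ldots,\psi_n)$ then yields one for $f$ upon replacing $\psi_k$ by the map $t\mapsto\psi_k(\overline{t})$ for each $k\in A$, since composing a unary map $\B\to[a,b]$ with negation again gives a unary map $\B\to[a,b]$. If $f$ (hence $g$) is constant the claim is trivial, so I may assume $g$ is nonconstant.

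It thus remains to represent a nonconstant isotone $g$. Writing $e_S\in\B^n$ for the characteristic vector of $S\subseteq[n]$, I would set $a=\min g=g(e_\emptyset)$ and $b=\max g=g(e_{[n]})$, so that $a<b$, take $\varphi_i(0)=a$ and $\varphi_i(1)=b$, and define a candidate polynomial function on $[a,b]$ by its disjunctive normal form
\[
p(\bfy)~=~\bigvee_{S\subseteq[n]}\Bigl(g(e_S)\wedge\bigwedge_{i\in S}y_i\Bigr).
\]
This $p$ is manifestly a lattice polynomial function with constants $g(e_S)\in[a,b]$; each join term is at most $b$, while the term $S=\emptyset$ equals $g(e_\emptyset)\wedge\top=a$, so $p$ indeed maps $[a,b]^n$ into $[a,b]$.

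The verification that $g=p\circ(\varphi_1,\ldots,\varphi_n)$ is the crux of the argument. Evaluating at $\bfx\in\B^n$ with support $X=\{i:x_i=1\}$, each $\varphi_i(x_i)$ equals the top $b$ or the bottom $a$ of the chain according to whether $i\in X$; hence $\bigwedge_{i\in S}\varphi_i(x_i)$ equals $b$ if $S\subseteq X$ and $a$ otherwise, so the expression collapses to $\bigvee_{S\subseteq X}g(e_S)$. Isotonicity of $g$ gives $g(e_S)\le g(e_X)$ for every $S\subseteq X$, whence this join equals $g(e_X)=g(\bfx)$, as required. I expect no serious obstacle here: the construction is explicit, and the only points demanding care are this collapse, the check that $p$ stays within $[a,b]$, and the bookkeeping of the reduction to the isotone case.
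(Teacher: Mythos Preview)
Your proof is correct and follows the same overall architecture as the paper's: both directions are handled the same way, and for the converse both set $a=\min f$, $b=\max f$ and use maps $\varphi_i\colon\B\to\{a,b\}$ to absorb the sign of each variable. The only substantive difference is in how the polynomial function is produced. The paper defines $p=f\circ(\varphi_1^{-1},\ldots,\varphi_n^{-1})$ on $\{a,b\}^n$, observes it is isotone, and invokes Goodstein's theorem to extend it to a lattice polynomial function on $[a,b]^n$. You instead write down the extension explicitly as the weighted disjunctive normal form $p(\bfy)=\bigvee_{S}\bigl(g(e_S)\wedge\bigwedge_{i\in S}y_i\bigr)$ and verify directly that it recovers $g$ on the corners. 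Your argument is thus self-contained, essentially reproving the instance of Goodstein's result that the paper cites; the paper's version is shorter but relies on that external reference. The slight organizational difference---you first reduce to the isotone case and then build the $\varphi_i$, whereas the paper folds the variable flips directly into the definition of the $\varphi_i$---is cosmetic.
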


\begin{proof}
Clearly, every pseudo-polynomial function is monotone. For the converse, suppose that $f\colon \B^n\to \R$ is monotone and let $a\in\R$ be the minimum and $b\in\R$ the maximum of $f$. Constant functions are obviously pseudo-polynomial functions, therefore we assume $a<b$. Define $\varphi_i\colon\B\to\{a,b\}$ by $\varphi_i(0)=a$ and $\varphi_i(1)=b$ if $f$ is isotone in its $i$-th variable and $\varphi_i(0)=b$ and $\varphi_i(1)=a$ otherwise. Let $p\colon\{a,b\}^n\to [a,b]$ be given by $p=f\circ(\varphi_1^{-1},\ldots,\varphi_n^{-1})$. Thus defined, $p$ is isotone (i.e., order-preserving) in each variable and hence, by Theorem D in Goodstein \cite[p.~237]{Goo67}, there exists a polynomial function $p'\colon [a,b]^n\to [a,b]$ such that $p'|_{\{a,b\}^n}=p$. Therefore $f$ is the pseudo-polynomial function $p'\circ(\varphi_1,\ldots,\varphi_n)$.
\end{proof}


In the special case of Boolean functions, monotone functions are most frequent among functions of small (essential) arity. For instance, among binary functions $f\colon \B^2\to \B$, there are exactly two non-monotone functions, namely the Boolean sum $x_1\oplus x_2$ and its negation $x_1\oplus x_2\oplus 1$. Each of these functions is in fact highly non-monotone in the sense that any of its partial derivatives changes in sign when negating its unique essential variable; this is not the case, e.g., with $f(x_{1},x_{2},x_{3})=x_{1}-x_{1}x_{2}+x_{2}x_{3}$ which is non-monotone but none of its partial derivatives changes in sign when negating any of its variables (see Example~\ref{non-mono} below).


This fact motivates the study of these ``skew"  functions, i.e., these highly non-monotone functions. To formalize this problem we propose the following parameterized relaxations of monotonicity: a function $f\colon\B^n\to\R$ is $p$-locally monotone if none of its partial derivatives changes in sign when negating less than $p$ of its variables, or equivalently, on tuples which differ in less than $p$ positions. With this terminology, our problem reduces to asking which Boolean functions are not $2$-locally monotone. As we will see (Corollary~\ref{cor Boolean 2-local via sections}), these are precisely those functions that have the Boolean sum or its negation as a binary section.

In this paper we extend this study to pseudo-Boolean functions and show that these parameterized relaxations of monotonicity are tightly related to the following lattice versions of partial derivatives. For $f\colon\B^n\to\R$ and $k\in [n]$, let
$\wedge_kf\colon\B^n\to\R$ and $\vee_kf\colon\B^n\to\R$ be the \emph{partial lattice derivatives} defined by
$$
\wedge_kf(\bfx) ~=~ f(\bfx_k^0)\wedge f(\bfx_k^1)\qquad \mbox{and}\qquad \vee_kf(\bfx) ~=~ f(\bfx_k^0)\vee f(\bfx_k^1).
$$
The latter, known as the $k$-th \emph{join derivative} of $f$, was proposed by Fadini \cite{Fad61} while the former, known as the $k$-th
\emph{meet derivative} of $f$, was introduced by Thayse \cite{Tha73}. In \cite{Tha78} these lattice derivatives were shown to be related to
so-called prime implicants and implicates of Boolean functions which play an important role in the consensus method for Boolean and
pseudo-Boolean functions. For further background and applications see, e.g., \cite{CraHam11,DavDesTha78,FolHam00,PosSte04,Tha81}.

Observe that, just like in the case of the partial derivative $\Delta_kf$, the $k$-th variable of each of the lattice derivatives $\wedge_kf$
and $\vee_kf$ is inessential.

The following proposition assembles some basic properties of lattice derivatives.

\begin{proposition}\label{prop basic properties of lattice derivatives}
For any pseudo-Boolean functions $f,g\colon\B^n\rightarrow\R$ and $j,k\in [n]$, $j\neq k$, the following hold:
\begin{enumerate}
\item[$(i)$] $\wedge_k\wedge_kf=\wedge_kf$ and $\vee_k\vee_kf=\vee_kf$;

\item[$(ii)$] if $f\leq g$, then $\wedge_kf\leq \wedge_kg$ and $\vee_kf\leq \vee_kg$;

\item[$(iii)$] $\wedge_j\wedge_kf=\wedge_k\wedge_jf$ and $\vee_j\vee_kf=\vee_k\vee_jf$;

\item[$(iv)$] $\vee_{k}\wedge_jf\leq \wedge_j\vee_kf$.
\end{enumerate}
\end{proposition}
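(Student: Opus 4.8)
The plan is to dispose of parts $(i)$--$(iii)$ by direct computation from the definitions, reserving the real work for the max--min inequality in $(iv)$. Throughout I would exploit that $\wedge$ and $\vee$ are just $\min$ and $\max$ on $\R$, hence commutative, associative, idempotent, and order-preserving in each argument.

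For $(i)$, I would use the observation preceding the proposition that the $k$-th variable of $\wedge_kf$ is inessential. Thus $(\wedge_kf)(\bfx_k^0)=(\wedge_kf)(\bfx_k^1)=\wedge_kf(\bfx)$, and by idempotence of $\wedge$ we get $\wedge_k\wedge_kf(\bfx)=(\wedge_kf)(\bfx_k^0)\wedge(\wedge_kf)(\bfx_k^1)=\wedge_kf(\bfx)$; the statement for $\vee_k$ is dual. Part $(ii)$ is immediate from monotonicity of $\min$ and $\max$: $f\leq g$ yields $f(\bfx_k^0)\leq g(\bfx_k^0)$ and $f(\bfx_k^1)\leq g(\bfx_k^1)$, and applying $\wedge$ (resp.\ $\vee$) to these two inequalities gives $\wedge_kf\leq\wedge_kg$ (resp.\ $\vee_kf\leq\vee_kg$).

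For $(iii)$, the idea is to expand both sides as a meet of the same four values. Writing $v_{ab}=f(\bfx_{jk}^{ab})$ for $a,b\in\B$, unfolding the definitions gives $\wedge_j\wedge_kf(\bfx)=v_{00}\wedge v_{01}\wedge v_{10}\wedge v_{11}$, and the identical four-fold meet results from $\wedge_k\wedge_jf(\bfx)$; commutativity and associativity of $\wedge$ then force equality. The join identity follows by replacing every $\wedge$ with $\vee$.

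The crux is $(iv)$. With the same shorthand $v_{ab}=f(\bfx_{jk}^{ab})$, unfolding the definitions yields
$$\vee_k\wedge_jf(\bfx)=(v_{00}\wedge v_{10})\vee(v_{01}\wedge v_{11}),\qquad \wedge_j\vee_kf(\bfx)=(v_{00}\vee v_{01})\wedge(v_{10}\vee v_{11}),$$
so the claim reduces to the elementary fact that a join of meets is bounded by the corresponding meet of joins in the chain $(\R,\leq)$. I would prove this by bounding each of the two meets on the left against the right-hand side separately: since $v_{00}\wedge v_{10}\leq v_{00}\leq v_{00}\vee v_{01}$ and $v_{00}\wedge v_{10}\leq v_{10}\leq v_{10}\vee v_{11}$, we get $v_{00}\wedge v_{10}\leq(v_{00}\vee v_{01})\wedge(v_{10}\vee v_{11})$, and the same chain of inequalities bounds $v_{01}\wedge v_{11}$ by the right-hand side; taking the join of the two left-hand terms preserves the bound. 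The only subtlety here---and the genuinely one-directional step in the whole proposition---is that the two sides of $(iv)$ need not coincide, so unlike in $(iii)$ one cannot argue by a symmetric expansion; otherwise the computation is routine.
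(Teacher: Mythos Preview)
Your proof is correct in every part; the expansions in $(iii)$ and $(iv)$ are computed accurately, and the max--min bound in $(iv)$ is established cleanly. The paper itself offers no proof of this proposition: it is stated as assembling ``basic properties of lattice derivatives'' and left as a routine verification, so there is no authorial argument to compare against. Your write-up is exactly the kind of direct check the authors presumably had in mind.
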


From equations (\ref{eq:PBF}) and (\ref{eq:PBFdiff}) it follows that every function is (up to an additive constant) uniquely determined by its
partial derivatives. As it turns out, this does not hold when lattice derivatives are considered. However, as we shall see (Theorem~\ref{thm reconstruction from derivatives}), there are only
two types of such exceptions.

Now, if an $n$-ary pseudo-Boolean function is $2$-locally monotone, then for every $j, k\in [n]$, $j\neq k$, we have
$\vee_{k}\wedge_{j}f=\wedge_{j}\vee_{k}f$ (see Lemma~\ref{lemma local ==> 2-permutable} below). This motivates the notion of permutable lattice
derivatives. As it turns out, $p$-local monotonicity of $f$ implies permutability of $p$ of its lattice derivatives (see Theorem~\ref{thm
p-local ==> p-permutable}). However the converse does not hold (see Example \ref{ex n-permutable but only 3-local}).

The structure of this paper goes as follows. In Section 2 we formalize the notion of $p$-local monotonicity and show that it gives rise to a
hierarchy of monotonicities whose largest member is the class of all $n$-ary pseudo-Boolean functions (this is the case when $p=1$) and whose
smallest member is the class of $n$-ary monotone functions (this is the case when $p=n$). We also provide a characterization of $p$-locally
monotone functions in terms of ``forbidden'' sections; as mentioned, this characterization is made explicit in the special case when $p=2$. In Section 3 we
introduce the notion of permutable lattice derivatives. Similarly to local monotonicity, the notion of permutable lattice derivatives gives rise to nested
classes, each of which is also described in terms of its sections. In the Boolean case and for $p=2$, these two parameterized notions are shown
to coincide; this does not hold for pseudo-Boolean functions even when $p=2$ (see Example~\ref{ex:5dsf67}). (At the end of Section 3 we also provide some game-theoretic interpretations of $p$-local monotonicity and $p$-permutability of lattice derivatives.) However, in Section 4, we show that
a symmetric function is $p$-locally monotone if and only if it has $p$-permutable lattice derivatives. In the last section we discuss directions for future research.

\section{Local monotonicities}

The following definition formulates a local version of monotonicity given in terms of Hamming distance between tuples. In what follows we assume that $p\in [n]$.

\begin{definition}
We say that $f\colon\mathbb{B}^{n}\to\mathbb{R}$ is \emph{$p$-locally monotone} if, for every $k\in [n]$ and every $\mathbf{x},\mathbf{y}\in
\mathbb{B}^{n}$, we have
$$
\sum_{i\in [n]\setminus \{k\}}|x_{i}-y_{i}|~<~p\quad \Rightarrow \quad \Delta_{k}f(\mathbf{x})\,\Delta_{k}f(\mathbf{y})~\geq ~0.
$$
\end{definition}

Any $p$-locally monotone pseudo-Boolean function is also $p'$-locally monotone for every $p'\leq p$. Every function
$f\colon\mathbb{B}^{n}\to\mathbb{R}$ is $1$-locally monotone, and $f$ is $n$-locally monotone if and only if it is monotone. Thus $p$-local
monotonicity is a relaxation of monotonicity, and the nested classes of $p$-locally monotone functions for $p=1,\ldots,n$ provide a hierarchy of
monotonicities for $n$-ary pseudo-Boolean functions. The weakest nontrivial condition is $2$-local monotonicity, therefore we will simply say
that $f$ is \emph{locally monotone} whenever $f$ is $2$-locally monotone.\footnote{In \cite{GotLin94},
local monotonicity is used to refer to Boolean functions which are monotone (i.e., isotone or antitone in each variable).} If $f$ is $p$-locally monotone for some $p<n$ but not $(p+1)$-locally
monotone, then we say that $f$ is \emph{exactly $p$-locally monotone}, or that the \emph{degree of local monotonicity} of $f$ is $p$.

If $f\colon\mathbb{B}^{n}\to\mathbb{B}$ is a Boolean function, then $\Delta_{k}f(\mathbf{x}) \in\{ {-1},0,1\}$ for all $\mathbf{x}\in
\mathbb{B}^{n}$, hence the condition $\Delta_{k}f(\mathbf{x})\,\Delta_{k}f(\mathbf{y})\geq 0$ in the definition of $p$-local monotonicity is
equivalent to
\begin{equation}\label{eq:re7t6}
|\Delta_{k}f(\mathbf{x})-\Delta_{k}f( \mathbf{y})|~\leq ~1.
\end{equation}
From this it follows that a Boolean function $f\colon\mathbb{B}^{n}\to\mathbb{B}$ is locally monotone if and only if
\begin{equation}\label{eq:1-Lipschitz}
\big|\Delta_{k}f(\mathbf{x})-\Delta_{k}f(\mathbf{y})\big|~\leq \sum_{i\in \lbrack n]\setminus \{k\}}|x_{i}-y_{i}|.
\end{equation}
(see \cite[Lemma~5.1]{MarMes04} for a proof of (\ref{eq:1-Lipschitz}) in a slightly more general framework). In a sense, the latter identity
means that $\Delta_{k}f$ is ``$1$-Lipschitz continuous''.

The following proposition is just a reformulation of the definition of $p$-local monotonicity.

\begin{proposition}\label{prop p-local}
A function $f\colon\mathbb{B}^{n}\to\mathbb{R}$ is $p$-locally monotone if and only if, for every $k\in [n]$, $S\subseteq [n]\setminus \{k\}$,
with $|S|=p-1$, and every $\mathbf{a}\in \mathbb{B}^{n}$, $\mathbf{x},\mathbf{y}\in \mathbb{B}^{S}$, we have
\begin{equation}\label{eq:re7t6fsd}
\Delta_{k}f(\mathbf{a}_{S}^{\mathbf{x}})\,\Delta _{k}f(\mathbf{a}_{S}^{\mathbf{y}})~\geq ~0.
\end{equation}
Equivalently, a pseudo-Boolean function is $p$-locally monotone if and only if none of its partial derivatives changes in sign when negating less than $p$ of its variables.
\end{proposition}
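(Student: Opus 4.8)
The plan is to derive both directions directly from the definitions, using throughout the fact (noted earlier) that $\Delta_k f$ does not depend on its $k$-th variable. Two consequences of this are worth isolating at the outset: the Hamming sum $\sum_{i\in[n]\setminus\{k\}}|x_i-y_i|$ is exactly the number of coordinates in $[n]\setminus\{k\}$ in which $\bfx$ and $\bfy$ differ, and the value $\Delta_k f(\bfx)$ is unchanged if the $k$-th entry of $\bfx$ is modified. For the forward implication, I would assume $f$ is $p$-locally monotone and fix $k\in[n]$, a set $S\subseteq[n]\setminus\{k\}$ with $|S|=p-1$, a tuple $\bfa\in\B^n$, and $\bfx,\bfy\in\B^S$. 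The tuples $\bfa_S^{\bfx}$ and $\bfa_S^{\bfy}$ coincide outside $S$, so they differ only in coordinates of $S\subseteq[n]\setminus\{k\}$; hence their Hamming sum over $[n]\setminus\{k\}$ is at most $|S|=p-1<p$, and the definition of $p$-local monotonicity yields (\ref{eq:re7t6fsd}) at once.

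For the converse, I would assume (\ref{eq:re7t6fsd}) holds for all admissible $k,S,\bfa,\bfx,\bfy$, and take $k\in[n]$ together with $\bfx,\bfy\in\B^n$ satisfying $\sum_{i\in[n]\setminus\{k\}}|x_i-y_i|<p$. Let $S_0=\{i\in[n]\setminus\{k\}:x_i\neq y_i\}$, so that $|S_0|\le p-1$. The key (and only slightly delicate) step is a padding argument: since $p\le n$ there are at least $p-1$ coordinates in $[n]\setminus\{k\}$, so $S_0$ can be enlarged to a set $S\subseteq[n]\setminus\{k\}$ with exactly $|S|=p-1$. Now set $\bfa=\bfx$ and let $\bfx',\bfy'\in\B^S$ be the restrictions of $\bfx,\bfy$ to $S$. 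Then $\bfa_S^{\bfx'}=\bfx$, whereas $\bfa_S^{\bfy'}$ agrees with $\bfy$ on all of $[n]\setminus\{k\}$: the two coincide with $\bfy$ on $S$ and with $\bfx=\bfy$ on $([n]\setminus\{k\})\setminus S$, since this last set is disjoint from $S_0$. Because $\Delta_k f$ ignores its $k$-th variable, it follows that $\Delta_k f(\bfa_S^{\bfx'})=\Delta_k f(\bfx)$ and $\Delta_k f(\bfa_S^{\bfy'})=\Delta_k f(\bfy)$, so applying the hypothesis to $S,\bfa,\bfx',\bfy'$ gives $\Delta_k f(\bfx)\,\Delta_k f(\bfy)\ge 0$, which is exactly the condition in the definition of $p$-local monotonicity.

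The closing sentence of the proposition is then just a rephrasing: a partial derivative $\Delta_k f$ \emph{changes in sign when negating fewer than $p$ variables} precisely when there exist tuples differing in fewer than $p$ of the coordinates in $[n]\setminus\{k\}$ on which $\Delta_k f$ takes values of strictly opposite sign, i.e.\ with product ${<}\,0$, and ruling this out for every $k$ is exactly the condition just established. I expect no real obstacle here: the statement is essentially definitional, and the only place demanding any care is the padding step in the converse, where the standing hypothesis $p\le n$ is used to guarantee that the set of differing coordinates can be completed to one of cardinality exactly $p-1$; everything else is immediate from the independence of $\Delta_k f$ from its $k$-th variable.
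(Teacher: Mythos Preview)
Your proof is correct and follows exactly the natural unpacking of the definitions; the paper itself gives no proof at all, introducing the proposition merely as ``a reformulation of the definition of $p$-local monotonicity,'' so your argument supplies precisely the details the authors deemed routine. The padding step and the use of the inessentiality of the $k$-th variable in $\Delta_k f$ are the only points requiring any care, and you handle both cleanly.
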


As a special case, we have that $f\colon\mathbb{B}^{n}\to\mathbb{R}$ is locally monotone if and only if, for every $j,k\in [n]$, $j\neq k$, and
every $\mathbf{x}\in \mathbb{B}^{n}$, we have
\begin{equation}\label{eq:67dsf5}
\Delta_{k}f(\mathbf{x}_{j}^{0})\,\Delta_{k}f(\mathbf{x}_{j}^{1})~\geq ~0.
\end{equation}
Equivalently, a pseudo-Boolean function is locally monotone if and only if none of its partial derivatives changes in sign when negating any of its variables.

By (\ref{eq:re7t6}) we see that, for Boolean functions $f\colon\mathbb{B}^{n}\to\mathbb{B}$, inequality (\ref{eq:67dsf5}) can be replaced with
$|\Delta _{jk}f(\mathbf{x})|\leq 1$, where $\Delta _{jk}f(\mathbf{x})=\Delta_j\Delta_{k}f(\mathbf{x})=\Delta_k\Delta_{j}f(\mathbf{x})$.

\begin{example}\label{ex:6s7dfa}
As observed, the binary Boolean sum
$$
f_{1}(x_{1},x_{2}) ~=~ x_{1}\oplus x_{2} ~=~ x_1+x_2-2x_1x_2
$$
and the binary Boolean equivalence
$$
f_{2}(x_{1},x_{2}) ~=~ \overline{f}_1(x_{1},x_{2}) ~=~ x_{1} \oplus x_{2}\oplus 1  ~=~ 1-x_1-x_2+2x_1x_2
$$
are not locally monotone. Indeed, we have $|\Delta_{12} f_1(x_1,x_2)|=|\Delta_{12} f_2(x_1,x_2)|=2$.
\end{example}

\begin{example}\label{non-mono}
Consider the ternary Boolean function $f\colon\mathbb{B}^{3}\to\mathbb{B}$ given by
\begin{equation*}
f(x_{1},x_{2},x_{3}) ~=~ x_{1}-x_{1}x_{2}+x_{2}x_{3}.
\end{equation*}
Since $\Delta _{2}f$ may change in sign ($\Delta_{2}f(\mathbf{x})=x_{3}-x_{1}$), the function $f$ is not monotone. However, $f$ is locally
monotone since $|\Delta_{12}f(\bfx)|=1$, $|\Delta_{13}f(\bfx)|=0$, and $|\Delta_{23}f(\bfx)|=1$. Thus $f$ is exactly $2$-locally monotone.
Example~\ref{ex f p-local Vf (p-1)-local} in Section~\ref{sect symmetric} provides, for each $p\geq 2$, examples of exactly $p$-locally monotone
functions.
\end{example}

\begin{fact}\label{fact:sdf65}
A function $f\colon\B^n\to\R$ is $p$-locally monotone if and only if so is $\alpha f+\beta$ for every $\alpha,\beta\in\R$, with $\alpha\neq0$.
The same holds for any function obtained from $f$ by negating some of its variables.
\end{fact}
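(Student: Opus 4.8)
The plan is to reduce both claims to how each operation acts on the partial derivatives $\Delta_k f$, since $p$-local monotonicity is a statement purely about the signs of the products $\Delta_k f(\bfx)\,\Delta_k f(\bfy)$ over pairs $(\bfx,\bfy)$ lying at restricted Hamming distance less than $p$. For the affine part, writing $g=\alpha f+\beta$ I would first note that the constant $\beta$ cancels in the defining difference, so $\Delta_k g(\bfx)=\alpha\,\Delta_k f(\bfx)$ for every $k$ and every $\bfx$. Hence $\Delta_k g(\bfx)\,\Delta_k g(\bfy)=\alpha^2\,\Delta_k f(\bfx)\,\Delta_k f(\bfy)$, and since $\alpha\neq0$ gives $\alpha^2>0$, the two products always share the same sign; as the transformation leaves the Hamming-distance condition untouched, $g$ satisfies the defining inequality for exactly the same pairs as $f$, yielding the equivalence.

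For the negation part, fix the set $T\subseteq[n]$ of variables to be negated and let $\sigma\colon\B^n\to\B^n$ be the corresponding involution, with $\sigma(\bfx)_i=\overline{x_i}$ for $i\in T$ and $\sigma(\bfx)_i=x_i$ otherwise, so the negated function is $g=f\circ\sigma$. The key step is to express $\Delta_k g$ through $\Delta_k f$. Checking how $\sigma$ acts on the tuples $\bfx_k^0$ and $\bfx_k^1$ --- the two values of the $k$-th coordinate get swapped precisely when $k\in T$ --- gives $\Delta_k g(\bfx)=\varepsilon_k\,\Delta_k f(\sigma(\bfx))$, where $\varepsilon_k=-1$ if $k\in T$ and $\varepsilon_k=+1$ otherwise. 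Consequently $\Delta_k g(\bfx)\,\Delta_k g(\bfy)=\Delta_k f(\sigma(\bfx))\,\Delta_k f(\sigma(\bfy))$, the sign $\varepsilon_k$ disappearing because it is squared.

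It remains to match the distance conditions. Since $\sigma$ negates both entries of every coordinate in $T$, it preserves coordinatewise differences, so $\sum_{i\neq k}|\sigma(\bfx)_i-\sigma(\bfy)_i|=\sum_{i\neq k}|x_i-y_i|$; and because $\sigma$ is a bijection of $\B^n$, the substituted pair $(\sigma(\bfx),\sigma(\bfy))$ ranges over all pairs at the same restricted Hamming distance as $(\bfx,\bfy)$. Thus the condition defining $p$-local monotonicity of $g$ is literally that of $f$ reindexed by $\sigma$, giving one implication, and applying the same argument to the involution $\sigma$ a second time yields the converse. There is no deep obstacle here --- the statement merely records stability of the definition under these natural symmetries; the only point needing care is the sign $\varepsilon_k$ arising when the differentiation variable $k$ itself lies in $T$, together with the observation that this sign is harmless since only the product $\Delta_k g(\bfx)\,\Delta_k g(\bfy)$ enters the definition.
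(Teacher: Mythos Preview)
Your argument is correct. The paper states this result as a Fact without proof, treating it as immediate from the definition of $p$-local monotonicity; your write-up supplies exactly the routine verification the paper omits, namely that $\Delta_k(\alpha f+\beta)=\alpha\,\Delta_k f$ and $\Delta_k(f\circ\sigma)(\bfx)=\pm\Delta_k f(\sigma(\bfx))$, together with the observation that $\sigma$ preserves the restricted Hamming distance.
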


The next theorem gives a characterization of $p$-locally monotone functions in terms of their sections.

\begin{theorem}\label{thm p-local via sections}
A function $f\colon \mathbb{B}^{n}\rightarrow \mathbb{R}$ is $p$-locally monotone if and only if every $p$-ary section of $f$ is monotone.
\end{theorem}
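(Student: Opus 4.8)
The plan is to exploit a direct dictionary between $p$-ary sections of $f$ and the quantities that appear in Proposition~\ref{prop p-local}. The key identity, which I would record first, is that substituting constants for variables commutes with taking a partial derivative in one of the surviving variables: if $T\subseteq [n]$ with $|T|=p$, $\mathbf{a}\in\mathbb{B}^n$, and $g\colon\mathbb{B}^T\to\mathbb{R}$ is the $T$-section $g(\mathbf{z})=f(\mathbf{a}_T^{\mathbf{z}})$, then for every $k\in T$ and $\mathbf{z}\in\mathbb{B}^T$,
$$
\Delta_k g(\mathbf{z}) ~=~ g(\mathbf{z}_k^1)-g(\mathbf{z}_k^0) ~=~ f\bigl((\mathbf{a}_T^{\mathbf{z}})_k^1\bigr)-f\bigl((\mathbf{a}_T^{\mathbf{z}})_k^0\bigr) ~=~ \Delta_k f(\mathbf{a}_T^{\mathbf{z}}).
$$
Together with the fact (already noted in the text) that $\Delta_k f$ is independent of its $k$-th coordinate, this says that monotonicity of $g$ in its $k$-th variable is exactly the statement that $\Delta_k f$ does not change sign as $\mathbf{z}$ ranges over those tuples that agree with $\mathbf{a}$ off $T$.

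For the ``only if'' direction I assume $f$ is $p$-locally monotone and fix an arbitrary $p$-ary section $g$, determined by a free set $T$ with $|T|=p$ and a tuple $\mathbf{a}$. To show $g$ is monotone I verify that $\Delta_k g$ does not change sign for each $k\in T$. Given $\mathbf{z},\mathbf{w}\in\mathbb{B}^T$, the tuples $\mathbf{a}_T^{\mathbf{z}}$ and $\mathbf{a}_T^{\mathbf{w}}$ coincide on $[n]\setminus T$, so they can differ only on $T$, and hence differ in at most $|T\setminus\{k\}|=p-1<p$ of the coordinates in $[n]\setminus\{k\}$. Applying the definition of $p$-local monotonicity with $\mathbf{x}=\mathbf{a}_T^{\mathbf{z}}$ and $\mathbf{y}=\mathbf{a}_T^{\mathbf{w}}$ yields $\Delta_k f(\mathbf{a}_T^{\mathbf{z}})\,\Delta_k f(\mathbf{a}_T^{\mathbf{w}})\geq 0$, which by the identity above is $\Delta_k g(\mathbf{z})\,\Delta_k g(\mathbf{w})\geq 0$, as desired.

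For the converse I assume every $p$-ary section is monotone and verify the reformulation in Proposition~\ref{prop p-local}. Fix $k\in[n]$, $S\subseteq[n]\setminus\{k\}$ with $|S|=p-1$, $\mathbf{a}\in\mathbb{B}^n$, and $\mathbf{x},\mathbf{y}\in\mathbb{B}^S$. Put $T=S\cup\{k\}$, so $|T|=p$, and let $g$ be the $T$-section determined by $\mathbf{a}$; by hypothesis $g$ is monotone, in particular in its $k$-th variable. Extending $\mathbf{x},\mathbf{y}$ to tuples in $\mathbb{B}^T$ by an arbitrary value in position $k$ and invoking both the commuting identity and the $k$-independence of $\Delta_k f$, I identify $\Delta_k f(\mathbf{a}_S^{\mathbf{x}})$ and $\Delta_k f(\mathbf{a}_S^{\mathbf{y}})$ with two values of $\Delta_k g$; since $\Delta_k g$ does not change sign, their product is nonnegative, which is exactly~(\ref{eq:re7t6fsd}). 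The only delicate point in the whole argument is this bookkeeping---matching the index set $T$ of a section with the pair $(k,S)$ of Proposition~\ref{prop p-local} and checking that position $k$ may be disregarded---but once the commuting identity is in place everything reduces to comparing Hamming distances with $p$.
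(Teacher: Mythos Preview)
Your proof is correct and follows essentially the same approach as the paper: both arguments observe that condition~(\ref{eq:re7t6fsd}) of Proposition~\ref{prop p-local} for a triple $(k,S,\mathbf{a})$ is precisely the requirement that the $(S\cup\{k\})$-section of $f$ at $\mathbf{a}$ be monotone in its $k$-th variable, via the identity $\Delta_k g(\mathbf{z})=\Delta_k f(\mathbf{a}_T^{\mathbf{z}})$. The paper compresses this correspondence into a single sentence, while you spell out the bookkeeping (the commuting identity and the handling of the $k$-th coordinate) explicitly; the substance is identical.
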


\begin{proof}
We just need to observe that the inequality (\ref{eq:re7t6fsd}) is equivalent to $\Delta_{k}g(\mathbf{x})\Delta_{k}g(\mathbf{y})\geq 0$, where
$g$ is the $p$-ary section of $f$ defined by $g(\mathbf{x}) =f(\mathbf{a}_{S\cup \{k\}}^{\mathbf{x}})$, where $S$ is a $(p-1)$-subset of $[n]\setminus \{k\}$.
Thus $f$ is $p$-locally monotone if and
only if $\Delta_{k}g(\mathbf{x})\Delta_{k}g(\mathbf{y})\geq 0$ holds for every  $\mathbf{x},\mathbf{y}\in\mathbb{B}^{S\cup \{k\}}$,
and for every ${S\cup \{k\}}$-section $g$ of $f$.
\end{proof}

By combining (\ref{eq:67dsf5}) with Theorem~\ref{thm p-local via sections}, we can easily verify the following corollary.

\begin{corollary}\label{cor:7f5s7}
A function $f\colon \mathbb{B}^{n}\rightarrow \mathbb{R}$ is locally monotone if and only if every binary section (\ref{eq:PBF22}) of $f$
satisfies $a_1(a_1+a_{12})\geq 0$ and $a_2(a_2+a_{12})\geq 0$.
\end{corollary}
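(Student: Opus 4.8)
The plan is to reduce the statement to a direct computation on a generic binary section via the characterization of local monotonicity in terms of sections. By Theorem~\ref{thm p-local via sections} applied with $p=2$, the function $f$ is locally monotone if and only if every binary section of $f$ is monotone. It therefore suffices to determine exactly when a binary function, written in the multilinear form (\ref{eq:PBF22}), is monotone.

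So first I would fix an arbitrary binary section $g\colon\B^2\to\R$ and write it as $g(x_1,x_2)=a_0+a_1x_1+a_2x_2+a_{12}x_1x_2$ as in (\ref{eq:PBF22}). Using the formal-derivative rule (\ref{eq:PBFdiff}), I would compute its two partial derivatives, obtaining $\Delta_1g(\mathbf{x})=a_1+a_{12}x_2$ and $\Delta_2g(\mathbf{x})=a_2+a_{12}x_1$. The key observation is that $\Delta_1g$ depends only on $x_2$ and hence takes exactly the two values $a_1$ (at $x_2=0$) and $a_1+a_{12}$ (at $x_2=1$); symmetrically, $\Delta_2g$ takes the values $a_2$ and $a_2+a_{12}$.

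Next I would translate monotonicity of $g$ into conditions on these values. The section $g$ is monotone precisely when neither $\Delta_1g$ nor $\Delta_2g$ changes in sign, which by (\ref{eq:67dsf5}) (the $p=2$ instance of the definition) means that the product of the two values of each derivative is nonnegative. For $\Delta_1g$ this is exactly $a_1(a_1+a_{12})\geq0$, and for $\Delta_2g$ it is $a_2(a_2+a_{12})\geq0$. Combining this equivalence with the reduction of the first paragraph yields the corollary: $f$ is locally monotone if and only if each of its binary sections satisfies both inequalities.

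I do not expect a genuine obstacle here, since the argument is essentially a one-line computation grafted onto Theorem~\ref{thm p-local via sections}. The only point demanding slight care is the passage from the coordinate-free ``does not change sign'' formulation to the coefficient inequalities: one must recognize that, for a binary function, each partial derivative takes only two values, and that (\ref{eq:67dsf5}) expresses precisely the nonnegativity of their product. Once the derivatives have been computed via (\ref{eq:PBFdiff}), everything falls out immediately.
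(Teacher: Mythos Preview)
Your argument is correct and follows essentially the same route as the paper, which simply remarks that the corollary is obtained by combining (\ref{eq:67dsf5}) with Theorem~\ref{thm p-local via sections}. You have merely spelled out the computation of the partial derivatives of a generic binary section, which is exactly what the paper leaves implicit.
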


Since every binary Boolean function is monotone except for $x\oplus y$ and $x\oplus y\oplus 1$, we also obtain the following corollary.

\begin{corollary}\label{cor Boolean 2-local via sections}
A Boolean function $f\colon \mathbb{B}^{n}\rightarrow \mathbb{B}$ is locally monotone if and only if neither $x\oplus y$ nor $x\oplus y\oplus 1$
is a section of $f$.
\end{corollary}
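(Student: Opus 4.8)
The plan is to deduce the corollary directly from Theorem~\ref{thm p-local via sections}. Since ``locally monotone'' is by definition the same as ``$2$-locally monotone'', setting $p=2$ in that theorem tells us that $f$ is locally monotone if and only if every binary (i.e.\ $2$-ary) section of $f$ is monotone. So the whole statement reduces to understanding monotonicity of binary sections, and the work lies entirely in the binary case.

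First I would observe that every binary section of a Boolean function $f\colon\B^n\to\B$ is again a binary Boolean function $\B^2\to\B$, of which there are only sixteen. The substantive step is then to check that among these sixteen functions exactly two fail to be monotone, namely $x\oplus y$ and $x\oplus y\oplus 1$. I would carry this out via Corollary~\ref{cor:7f5s7}: a binary function written in the form (\ref{eq:PBF22}) is monotone precisely when $a_1(a_1+a_{12})\geq 0$ and $a_2(a_2+a_{12})\geq 0$ (for a binary function these local-monotonicity inequalities coincide with genuine monotonicity, as its only binary section is itself). A short inspection of the coefficient patterns shows both inequalities hold for all sixteen functions except $x\oplus y = x_1+x_2-2x_1x_2$, where $a_1=a_2=1$ and $a_{12}=-2$ give $a_1(a_1+a_{12})=-1<0$, and its negation $x\oplus y\oplus 1$, where symmetrically $a_1=a_2=-1$ and $a_{12}=2$ give $a_1(a_1+a_{12})=-1<0$. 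Alternatively, one may simply cite the computation $|\Delta_{12}f|=2$ from Example~\ref{ex:6s7dfa} for these two functions together with the fact, stated just before the corollary, that every other binary Boolean function is monotone.

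Combining the two steps: $f$ admits a non-monotone binary section if and only if one of $x\oplus y$, $x\oplus y\oplus 1$ arises as a binary section of $f$; negating this gives the claimed equivalence. I do not anticipate any genuine obstacle, since the only point needing care is the classification of binary Boolean functions into the monotone ones and the two exceptions, which is essentially recorded already. The one minor subtlety to flag is that the word ``section'' in the statement ranges over all $S\subseteq[n]$, but because $x\oplus y$ and $x\oplus y\oplus 1$ have arity two they can occur only as $S$-sections with $|S|=2$; hence restricting to binary sections loses no generality, and ``is a section of $f$'' is equivalent here to ``is a binary section of $f$''.
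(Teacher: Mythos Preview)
Your proposal is correct and follows essentially the same route as the paper: the corollary is stated in the paper immediately after the sentence ``Since every binary Boolean function is monotone except for $x\oplus y$ and $x\oplus y\oplus 1$, we also obtain the following corollary,'' so the paper's own argument is precisely the combination of Theorem~\ref{thm p-local via sections} (with $p=2$) with this classification of binary Boolean functions. Your additional remarks (the verification via Corollary~\ref{cor:7f5s7} and the observation that a two-variable function can only appear as an $S$-section with $|S|=2$) are correct elaborations but not a genuinely different approach.
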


\section{Permutable lattice derivatives}

The aim of this section is to relate commutation of lattice derivatives to $p$-local monotonicity. The starting point is the characterization of
locally monotone Boolean functions given in Theorem~\ref{thm Boolean 2-local iff 2-permutable} below.

\begin{lemma}\label{lemma local ==> 2-permutable}
If $f\colon \mathbb{B}^{n}\to\mathbb{R}$ is locally monotone, then $\vee_{k}\wedge_{j}f=\wedge_{j}\vee_{k}f$ for all $j, k\in [n]$, $j\neq k$.
\end{lemma}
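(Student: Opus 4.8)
The plan is to reduce the identity to a statement about four real numbers and then settle it by a short case analysis driven by the local monotonicity hypothesis. Fix distinct $j,k\in[n]$ and an arbitrary $\bfx\in\B^n$. Since $\wedge_j f$ is independent of $x_j$ and applying $\vee_k$ preserves this while removing dependence on $x_k$, both $\vee_k\wedge_jf$ and $\wedge_j\vee_kf$ have $x_j,x_k$ inessential and depend on $f$ only through its values at the four tuples obtained by varying $x_j,x_k\in\B$. I would therefore abbreviate
$$
A=f(\bfx_{jk}^{00}),\quad B=f(\bfx_{jk}^{10}),\quad C=f(\bfx_{jk}^{01}),\quad D=f(\bfx_{jk}^{11}),
$$
and unwind the definitions of the meet and join derivatives to obtain
$$
\vee_k\wedge_jf(\bfx)=\max\bigl(\min(A,B),\min(C,D)\bigr),\qquad \wedge_j\vee_kf(\bfx)=\min\bigl(\max(A,C),\max(B,D)\bigr).
$$
The claim then becomes the purely numerical identity $\max(\min(A,B),\min(C,D))=\min(\max(A,C),\max(B,D))$.

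Next I would bring in the hypothesis. Specializing (\ref{eq:67dsf5}) to the present $j,k$ and $\bfx$, and noting that $\Delta_kf(\bfx_j^0)=C-A$ and $\Delta_kf(\bfx_j^1)=D-B$, local monotonicity yields exactly $(C-A)(D-B)\ge 0$; that is, $C-A$ and $D-B$ have the same sign.

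The core of the argument is then a two-case split on this sign condition. If $C\ge A$ and $D\ge B$, then $\max(A,C)=C$ and $\max(B,D)=D$, so the right-hand side equals $\min(C,D)$; moreover $\min(C,D)\ge\min(A,B)$ in this case, so the left-hand side also equals $\min(C,D)$. The case $C\le A$, $D\le B$ is symmetric, both sides collapsing to $\min(A,B)$. This establishes equality at $\bfx$, and since $\bfx$, $j$, $k$ were arbitrary, the lemma follows. I do not expect a genuine obstacle: the inequality $\le$ is already furnished by Proposition~\ref{prop basic properties of lattice derivatives}(iv), so only the reverse inequality truly needs the hypothesis, and the case analysis delivers it. The one point requiring care is the bookkeeping—matching the correct partial derivative together with the correct variable being negated to the inequality $(C-A)(D-B)\ge 0$, rather than the companion condition $(B-A)(D-C)\ge 0$ that comes from negating $x_k$ in $\Delta_jf$ and is not the one needed here.
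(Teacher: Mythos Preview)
Your proof is correct and follows essentially the same route as the paper: reduce to the four values of $f$ at $\bfx_{jk}^{ab}$ and verify the identity by a case split governed by local monotonicity. The only difference is cosmetic: the paper invokes Theorem~\ref{thm p-local via sections} to say the binary section is monotone and then checks four cases (isotone/antitone in each variable), whereas you use only the single sign condition $(C-A)(D-B)\ge 0$ from~(\ref{eq:67dsf5}) and thereby get away with two cases.
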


\begin{proof}
Let $f\colon \mathbb{B}^{n}\to\mathbb{R}$ be a locally monotone function, and let $j, k\in [n]$, $j\neq k$. Setting $a=f(\mathbf{x}_{jk}^{00})$,
$b=f(\mathbf{x}_{jk}^{01})$, $c=f(\mathbf{x}_{jk}^{10})$, and $d=f(\mathbf{x}_{jk}^{11})$, the desired equality $\vee_{k}\wedge_{j}f(\mathbf{x})
=\wedge_{j}\vee_{k}f(\mathbf{x})$ takes the form
\begin{equation}\label{eq abcd}
(a\wedge c)\vee (b\wedge d) ~=~ (a\vee b)\wedge (c\vee d).
\end{equation}
Since $f$ is $2$-locally monotone, the binary section $g( u,v)=f(\mathbf{x}_{jk}^{uv})$ is monotone, according to Theorem~\ref{thm p-local via
sections}. If $g$ is isotone in $u$, then $a\leq c$ and $b\leq d$, while if $g $ is antitone in $u$, then $a\geq c$ and $b\geq d$. Similarly, we
have either $a\leq b$ and $c\leq d$ or $a\geq b$ and $c\geq d$, depending on whether $g$ is isotone or antitone in $v$. Thus we need to consider
four cases, and in each one of them it is straightforward to verify (\ref{eq abcd}).
\end{proof}

\begin{theorem}\label{thm Boolean 2-local iff 2-permutable}
A Boolean function $f\colon\mathbb{B}^{n}\to\mathbb{B}$ is locally monotone if and only if $\vee_{k}\wedge_{j}f=\wedge_{j}\vee_{k}f$ holds for
all $j, k\in [n]$, $j\neq k$.
\end{theorem}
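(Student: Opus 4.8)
The plan is to prove both implications. The forward direction is already established: Lemma~\ref{lemma local ==> 2-permutable} shows that every locally monotone function (Boolean or not) satisfies $\vee_{k}\wedge_{j}f=\wedge_{j}\vee_{k}f$ for all $j\neq k$. So all the work lies in the converse, where I must exploit the Booleanness of $f$, since the statement genuinely fails for pseudo-Boolean functions (as the excerpt warns via Example~\ref{ex:5dsf67}).

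For the converse I would argue by contraposition: assume $f\colon\mathbb{B}^n\to\mathbb{B}$ is \emph{not} locally monotone, and produce a pair $j\neq k$ with $\vee_k\wedge_j f\neq\wedge_j\vee_k f$. By Corollary~\ref{cor Boolean 2-local via sections}, failure of local monotonicity means that $f$ has $x\oplus y$ or $x\oplus y\oplus 1$ as a section. Concretely, there exist distinct indices $j,k\in[n]$ and a fixed assignment $\mathbf{a}\in\mathbb{B}^n$ to the remaining variables such that the binary section $g(u,v)=f(\mathbf{a}_{\{j,k\}}^{(u,v)})$ equals $u\oplus v$ or $u\oplus v\oplus 1$. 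Following the notation of Lemma~\ref{lemma local ==> 2-permutable}, I set $a=g(0,0)$, $b=g(0,1)$, $c=g(1,0)$, $d=g(1,1)$, so that the identity $\vee_k\wedge_j f=\wedge_j\vee_k f$ becomes, at this fixed outer assignment, exactly equation~(\ref{eq abcd}):
$$
(a\wedge c)\vee(b\wedge d) ~=~ (a\vee b)\wedge(c\vee d).
$$

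The key step is then a direct computation on these four Boolean values. For the sum section $g=u\oplus v$ we have $(a,b,c,d)=(0,1,1,0)$, whence the left side is $(0\wedge 1)\vee(1\wedge 0)=0$ while the right side is $(0\vee 1)\wedge(1\vee 0)=1$, so~(\ref{eq abcd}) fails. For the equivalence section $g=u\oplus v\oplus 1$ we have $(a,b,c,d)=(1,0,0,1)$, giving left side $(1\wedge 0)\vee(0\wedge 1)=0$ and right side $(1\vee 0)\wedge(0\vee 1)=1$, which again violates~(\ref{eq abcd}). In either case $\vee_k\wedge_j f$ and $\wedge_j\vee_k f$ disagree at $\mathbf{x}=\mathbf{a}$ (these lattice derivatives are evaluated on tuples whose $j$-th and $k$-th coordinates are irrelevant, since those variables are inessential), so permutability fails for the pair $j,k$. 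This establishes the contrapositive and completes the converse.

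I expect the main obstacle to be conceptual rather than computational: recognizing that the Boolean hypothesis is exactly what forces the only non-permutable sections to be the two Boolean sum functions, and confirming that permutability localizes to a single pair of variables so that a section-level failure lifts to a global failure of the identity. The arithmetic in~(\ref{eq abcd}) is trivial once the four relevant cases are pinned down; the care needed is only in tracking that the outer variables are held fixed and that the $j,k$-coordinates may be assigned arbitrarily without affecting either lattice derivative.
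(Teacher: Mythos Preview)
Your proof is correct and follows essentially the same route as the paper: the forward direction via Lemma~\ref{lemma local ==> 2-permutable}, and the converse by contraposition, invoking Corollary~\ref{cor Boolean 2-local via sections} to reduce to the two forbidden sections and then checking directly that equation~(\ref{eq abcd}) fails for both. The paper writes out $\vee_k\wedge_j f(\mathbf{a})$ and $\wedge_j\vee_k f(\mathbf{a})$ explicitly rather than quoting~(\ref{eq abcd}), but the computation is identical.
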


\begin{proof}
If $f$ is locally monotone, then $\vee_{k}\wedge_{j}f=\wedge_{j}\vee_{k}f$ by Lemma~\ref{lemma local ==> 2-permutable}. If $f$ is not locally
monotone, then Corollary~\ref{cor Boolean 2-local via sections} implies that there exists $\mathbf{a}\in\mathbb{B}^{n}$ and $j, k\in [n]$,
$j\neq k$, such that the binary section $g(u,v) =f(\mathbf{a}_{jk}^{uv})$ is of the form $g(u,v)=u\oplus v$ or $g(u,v)=u\oplus v\oplus 1$. Then
we have
\begin{eqnarray*}
\vee_{k}\wedge_{j}f(\mathbf{a}) &=&( g( 0,0) \wedge g( 1,0) ) \vee ( g(0,1) \wedge g( 1,1) ) =0, \\
\wedge _{j}\vee_{k}f(\mathbf{a}) &=&( g( 0,0) \vee g( 0,1) ) \wedge ( g( 1,0) \vee g( 1,1) ) =1,
\end{eqnarray*}
showing that $\vee_{k}\wedge_{j}f\neq \wedge_{j}\vee_{k}f$.
\end{proof}

As the next example shows, Theorem~\ref{thm Boolean 2-local iff 2-permutable} is not valid for pseudo-Boolean functions.

\begin{example}\label{ex:5dsf67}
Let $f$ be the binary pseudo-Boolean function defined by $f(0,0) =1$, $f(0,1) =4$, $f(1,0) =2$ and $f(1,1) =3$. Then we have
$\vee_{2}\wedge_{1}f=\wedge_{1}\vee_{2}f=3$ and $\vee_{1}\wedge_{2}f=\wedge_{2}\vee_{1}f=2$. However, $f$ is not locally monotone since
$\Delta_{1}f(\mathbf{x}_{2}^{0})\Delta_{1}f(\mathbf{x}_{2}^{1})=-1$.
\end{example}

Lemma~\ref{lemma local ==> 2-permutable} and Theorem~\ref{thm Boolean 2-local iff 2-permutable} motivate the following notion of permutability of lattice derivatives, and its relation to local monotonicities.

\begin{definition}
We say that a pseudo-Boolean function $f\colon \mathbb{B}^{n}\to\mathbb{R}$ \emph{has $p$-permutable lattice derivatives} if, for every
$p$-subset $\{k_{1},\ldots ,k_{p}\}\subseteq [n]$, every choice of the operators $O_{k_{i}}\in \left\{\wedge_{k_{i}},\vee_{k_{i}}\right\}$
$(i=1,\ldots ,p)$, and every permutation $\pi\in S_{p}$, the following identity holds:
$$
O_{k_{1}}\cdots{\,}O_{k_{p}}f ~=~ O_{k_{\pi(1)}}\cdots{\,}O_{k_{\pi(p)}}f.
$$
If $f\colon \mathbb{B}^{n}\to\mathbb{R}$ has $n$-permutable lattice derivatives, then we simply say that $f$ \emph{has permutable lattice derivatives}.
\end{definition}

Every function $f\colon\B^n\to\R$ has $1$-permutable lattice derivatives. We will see in Theorem~\ref{thm (p+1)-permutable ==> p-permutable}
that if a function $f\colon\B^n\to\R$ has $p$-permutable lattice derivatives, then it also has $p'$-permutable lattice derivatives for every
$p'\leq p$.

\begin{fact}\label{fact:saf6d5s}
A function $f\colon\B^n\to\R$ has $p$-permutable lattice derivatives if and only if so has $\alpha f+\beta$ for every $\alpha,\beta\in\R$, with
$\alpha\neq0$. The same holds for any function obtained from $f$ by negating some of its variables.
\end{fact}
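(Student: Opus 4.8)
The plan is to reduce everything to two elementary algebraic facts about how affine maps and coordinate negations interact with the lattice operations $\wedge=\min$ and $\vee=\max$.

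First I would record the behaviour of a single lattice derivative under the substitution $g=\alpha f+\beta$. If $\alpha>0$, the map $t\mapsto \alpha t+\beta$ is increasing, so it commutes with both $\min$ and $\max$; hence $\wedge_k g=\alpha(\wedge_k f)+\beta$ and $\vee_k g=\alpha(\vee_k f)+\beta$. If $\alpha<0$, the same map is decreasing and therefore interchanges $\min$ and $\max$, giving $\wedge_k g=\alpha(\vee_k f)+\beta$ and $\vee_k g=\alpha(\wedge_k f)+\beta$. Writing $O_k^{\ast}$ for the operator obtained from $O_k\in\{\wedge_k,\vee_k\}$ by interchanging $\wedge_k$ and $\vee_k$, both cases are summarised by saying that applying $O_k$ to $\alpha f+\beta$ yields $\alpha(O_k f)+\beta$ when $\alpha>0$ and $\alpha(O_k^{\ast}f)+\beta$ when $\alpha<0$. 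Since the leading coefficient $\alpha$ keeps its sign after one application, a straightforward induction on $p$ gives, for any composite,
$$O_{k_1}\cdots O_{k_p}(\alpha f+\beta)~=~\alpha\,\big(O_{k_1}^{(\ast)}\cdots O_{k_p}^{(\ast)}f\big)+\beta,$$
where each $O_{k_i}^{(\ast)}$ denotes $O_{k_i}$ if $\alpha>0$ and $O_{k_i}^{\ast}$ if $\alpha<0$.

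With this identity in hand the affine part is immediate. Suppose $f$ has $p$-permutable lattice derivatives and fix a $p$-subset, a choice of operators $O_{k_i}$, and a permutation $\pi$. Applying the displayed identity to both $O_{k_1}\cdots O_{k_p}(\alpha f+\beta)$ and $O_{k_{\pi(1)}}\cdots O_{k_{\pi(p)}}(\alpha f+\beta)$, the two expressions reduce to $\alpha$ times two composites of $f$ built from the same family of operators (the dualized operators $O_{k_i}^{\ast}$ when $\alpha<0$) read in the two orders, plus $\beta$. Since the collection $\{O_{k_i}^{\ast}\}$ is just another admissible assignment of operators on the same index set, $p$-permutability of $f$ makes these two composites of $f$ equal, whence the two composites of $\alpha f+\beta$ coincide. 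As $t\mapsto\alpha^{-1}(t-\beta)$ is again affine with nonzero leading coefficient, the converse follows by symmetry, establishing the equivalence.

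For the negation statement it suffices to treat a single variable $x_j$, the general case following by composition. Let $N_j$ denote the operation of negating the $j$-th variable, $(N_jf)(\bfx)=f(\bfx_j^{\overline{x_j}})$; note that $N_j$ is an involution. The key computation is that $N_j$ commutes with every lattice derivative operator. For $k\neq j$ this is because negating variable $j$ and forming the min or max over variable $k$ act on disjoint coordinates, so $O_k(N_jf)=N_j(O_kf)$. For $k=j$, negating $x_j$ merely interchanges the two arguments $f(\bfx_j^0)$ and $f(\bfx_j^1)$ in the definition of $\wedge_j$ and $\vee_j$; since $\min$ and $\max$ are symmetric this leaves both derivatives unchanged, and as $\wedge_jf$ and $\vee_jf$ do not depend on $x_j$ we again get $O_j(N_jf)=O_jf=N_j(O_jf)$. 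Pushing $N_j$ through an entire composite then yields $O_{k_1}\cdots O_{k_p}(N_jf)=N_j(O_{k_1}\cdots O_{k_p}f)$, and comparing this with the corresponding identity for the $\pi$-permuted composite shows, via $p$-permutability of $f$ and the injectivity of $N_j$, that $N_jf$ is again $p$-permutable. The involutivity of $N_j$ gives the converse. The routine parts are the two single-operator identities; the one point that genuinely needs care is the $\alpha<0$ case, where one must observe that replacing each operator by its dual is harmless \emph{precisely because} the definition of $p$-permutability quantifies over all choices of operators, so the dualized composites of $f$ are still governed by the hypothesis. I expect this to be the only conceptual step, everything else being bookkeeping.
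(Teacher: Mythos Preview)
Your proof is correct. The paper states this result as a Fact without proof, treating it as a routine observation; your argument supplies exactly the verification the paper omits, and the two key points you isolate---that for $\alpha<0$ the operators dualise but the definition of $p$-permutability already quantifies over all operator choices, and that $N_j$ commutes with every $O_k$---are precisely what makes the statement go through.
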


\begin{fact}\label{fact:fd5gfd45}
A function $f\colon\B^n\to\R$ has $p$-permutable lattice derivatives if and only if every $p$-ary section of $f$ has permutable lattice derivatives.
\end{fact}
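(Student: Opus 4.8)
The plan is to prove both implications at once by establishing a single ``localization'' identity that reduces any composition of $p$ lattice derivatives to the same composition computed inside a $p$-ary section. Fix a $p$-subset $K=\{k_1,\ldots,k_p\}\subseteq[n]$, a choice of operators $O_{k_i}\in\{\wedge_{k_i},\vee_{k_i}\}$, and a point $\bfa\in\B^n$. Let $g\colon\B^K\to\R$ be the $K$-section of $f$ at $\bfa$, i.e.\ $g(\bfy)=f(\bfa_K^{\bfy})$. The key claim is that
$$
\big(O_{k_1}\cdots O_{k_p}f\big)(\bfa) ~=~ O_{k_1}\cdots O_{k_p}g,
$$
where the right-hand side is the (constant) value obtained by applying all $p$ lattice derivatives to the $p$-ary function $g$. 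Intuitively this holds because each operator $O_{k_i}$ only inspects the values of its argument at the two points differing in coordinate $k_i$, so the entire computation of $\big(O_{k_1}\cdots O_{k_p}f\big)(\bfa)$ takes place on the subcube $\{\bfa_K^{\bfy}:\bfy\in\B^K\}$, which is exactly the graph of $g$.

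To prove the claim I would induct on $p$, peeling off the outermost operator. The case $p=1$ is immediate from the definitions of $\wedge_k$ and $\vee_k$. For the inductive step, write $K'=\{k_2,\ldots,k_p\}$ and expand
$$
\big(O_{k_1}\cdots O_{k_p}f\big)(\bfa) ~=~ \big(O_{k_2}\cdots O_{k_p}f\big)(\bfa_{k_1}^0)\star\big(O_{k_2}\cdots O_{k_p}f\big)(\bfa_{k_1}^1),
$$
where $\star\in\{\wedge,\vee\}$ is the operation underlying $O_{k_1}$. Applying the induction hypothesis to the $K'$-sections of $f$ at $\bfa_{k_1}^0$ and $\bfa_{k_1}^1$ rewrites each term as $O_{k_2}\cdots O_{k_p}$ applied to the corresponding $k_1$-slice of $g$; recombining the two slices via $\star$ yields precisely $O_{k_1}\cdots O_{k_p}g$, since $O_{k_2}\cdots O_{k_p}g$ depends only on the $k_1$-coordinate. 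Here one only has to keep the bookkeeping between the coordinates of $\bfa$ and those of the section straight, and to use that each lattice derivative leaves its own variable inessential.

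With the localization identity in hand both directions follow at once. Since $g$ is $p$-ary, it has permutable lattice derivatives exactly when $O_{k_1}\cdots O_{k_p}g=O_{k_{\pi(1)}}\cdots O_{k_{\pi(p)}}g$ for every choice of operators and every $\pi\in S_p$. For the ``if'' direction, assuming every $p$-ary section of $f$ has permutable lattice derivatives, the claim gives $\big(O_{k_1}\cdots O_{k_p}f\big)(\bfa)=\big(O_{k_{\pi(1)}}\cdots O_{k_{\pi(p)}}f\big)(\bfa)$ for every $\bfa$; as both functions are independent of the variables in $K$, ranging over all $\bfa$ shows they agree everywhere, so $f$ has $p$-permutable lattice derivatives. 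For the ``only if'' direction, given a $p$-ary section $g$ of $f$ with $g(\bfy)=f(\bfa_K^{\bfy})$ and any operators and $\pi$, the equality of $O_{k_1}\cdots O_{k_p}f$ and $O_{k_{\pi(1)}}\cdots O_{k_{\pi(p)}}f$ as functions forces their values at $\bfa$ to coincide, and the claim translates this into $O_{k_1}\cdots O_{k_p}g=O_{k_{\pi(1)}}\cdots O_{k_{\pi(p)}}g$; hence $g$ has permutable lattice derivatives.

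The main obstacle is the localization identity itself: verifying carefully that the composition over $K$ genuinely depends only on the $K$-section, and that the induction correctly matches the $k_1$-slices of $g$ with the sections of $f$ at $\bfa_{k_1}^0$ and $\bfa_{k_1}^1$. Once this is settled, both implications are one-line consequences, so essentially all the content lies in the claim.
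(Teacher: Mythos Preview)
Your argument is correct. The localization identity
\[
\big(O_{k_1}\cdots O_{k_p}f\big)(\bfa) ~=~ O_{k_1}\cdots O_{k_p}g
\]
for the $K$-section $g$ at $\bfa$ is exactly the right observation, and your inductive proof of it is sound: the $K'$-section of $f$ at $\bfa_{k_1}^c$ is indeed the $k_1=c$ slice of $g$, so peeling off the outer operator works as you describe. Both implications then follow immediately, as you note.

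As for comparison with the paper: there is none to make. The paper states this as a \emph{Fact} with no proof; the authors regard it as an immediate consequence of the definitions. Your write-up simply makes explicit the computation that justifies calling it a fact, namely that a composition $O_{k_1}\cdots O_{k_p}f$ evaluated at $\bfa$ depends only on the values of $f$ on the subcube $\{\bfa_K^{\bfy}:\bfy\in\B^K\}$. So your approach is not different from the paper's---it is the paper's implicit reasoning written out in full.
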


In the particular case when $p=2$, we have the following description of functions having $2$-permutable lattice derivatives. The proof is a
straightforward verification of cases.

\begin{proposition}\label{prop:7f5s7a}
A function $f\colon \mathbb{B}^{n}\rightarrow \mathbb{R}$ has $2$-permutable lattice derivatives if and only if every binary section
(\ref{eq:PBF22}) of $f$ satisfies $a_1\, a_{12}\geq 0$ or $a_2\, a_{12}\geq 0$ or $|a_{12}|\leq |a_1|\vee |a_2|$.
\end{proposition}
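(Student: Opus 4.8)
The plan is to reduce the statement to a finite computation over binary pseudo-Boolean functions. By Fact~\ref{fact:fd5gfd45}, $f$ has $2$-permutable lattice derivatives if and only if every binary section of $f$ has permutable lattice derivatives, so it suffices to characterize exactly when a single binary function of the form (\ref{eq:PBF22}) has permutable lattice derivatives, i.e.\ satisfies $\vee_2\wedge_1 g=\wedge_1\vee_2 g$ (and, by symmetry of the permutability condition and Proposition~\ref{prop basic properties of lattice derivatives}(iv), the reverse inequality together with the swapped-index version). First I would record the four values $a=g(0,0)=a_0$, $b=g(0,1)=a_0+a_2$, $c=g(1,0)=a_0+a_1$, $d=g(1,1)=a_0+a_1+a_2+a_{12}$, so that the single nontrivial permutability equation becomes $(a\wedge c)\vee(b\wedge d)=(a\vee b)\wedge(c\vee d)$, exactly the identity (\ref{eq abcd}) already analyzed in Lemma~\ref{lemma local ==> 2-permutable}.

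The core of the argument is then to determine, purely in terms of the lattice-theoretic order among $a,b,c,d$, precisely when (\ref{eq abcd}) holds. By Fact~\ref{fact:saf6d5s} I may freely translate and positively scale, and negate variables, which swaps the roles of $a\leftrightarrow c$ and $b\leftrightarrow d$ (negating $x_1$) or $a\leftrightarrow b$ and $c\leftrightarrow d$ (negating $x_2$); this symmetry cuts the case analysis roughly fourfold. The identity (\ref{eq abcd}) always holds when $g$ is monotone in some variable (the four monotone cases from Lemma~\ref{lemma local ==> 2-permutable}), which in terms of coefficients corresponds to $a_1\,a_{12}\geq 0$ (monotone in $x_1$: the increments $\Delta_1 g=a_1$ at the bottom and $a_1+a_{12}$ at the top share a sign) or $a_2\,a_{12}\geq 0$ (monotone in $x_2$). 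The remaining, genuinely non-monotone, configurations are where $g$ is strictly up in one variable at one level and strictly down at the other; here I would show by direct evaluation of both sides of (\ref{eq abcd}) that the identity holds precisely when the ``crossing'' is not too large, which translates into the condition $|a_{12}|\leq|a_1|\vee|a_2|$. The function in Example~\ref{ex:5dsf67}, where $(a,b,c,d)=(1,4,2,3)$ gives $a_1=1$, $a_2=3$, $a_{12}=-4$ and satisfies $|a_{12}|=4\leq 4=|a_1|\vee|a_2|$ while being non-monotone, is the guiding instance to keep the inequalities correctly oriented.

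Concretely, I would organize the verification as follows. Up to the symmetries above, assume $g$ is (strictly) isotone in $x_1$ but antitone in $x_2$, i.e.\ $a_1>0>a_1+a_{12}$ is excluded by assuming instead the interesting sign pattern; the honest bookkeeping is to split on the signs of the four first differences $a_1$, $a_1+a_{12}$, $a_2$, $a_2+a_{12}$. In each surviving case both sides of (\ref{eq abcd}) reduce to a specific one of $a,b,c,d$, and comparing them yields an inequality between $|a_{12}|$ and $|a_1|$ or $|a_2|$; assembling these gives exactly the disjunction $a_1a_{12}\geq 0$ or $a_2 a_{12}\geq 0$ or $|a_{12}|\leq|a_1|\vee|a_2|$. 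Conversely, if all three fail, I would exhibit that (\ref{eq abcd}) is violated, so the three conditions are jointly necessary.

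The main obstacle is simply keeping the case analysis honest and exhaustive: (\ref{eq abcd}) is a lattice identity in four reals, and naively there are many sign/order patterns to check. The real work is to exploit the variable-negation symmetry of Fact~\ref{fact:saf6d5s} aggressively to collapse these to a handful of representative cases, and to be careful at the boundary (equalities and the strict-versus-nonstrict distinction in $|a_{12}|\leq|a_1|\vee|a_2|$) so that the disjunction of the three coefficient conditions is tight. Since each individual case is a one-line comparison of two of the four values, no single step is deep; hence the proposition is, as the statement says, a straightforward verification of cases once the reduction to binary sections and the symmetry reduction are in place.
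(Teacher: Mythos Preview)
Your approach is exactly what the paper has in mind: reduce to binary sections via Fact~\ref{fact:fd5gfd45}, and then check by cases when the identity $(a\wedge c)\vee(b\wedge d)=(a\vee b)\wedge(c\vee d)$ (and its index-swapped twin) holds for the four values of a generic binary section. The paper gives no details beyond ``straightforward verification of cases,'' so your outline is already more explicit than the original.

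Two small corrections are worth making before you write it up. First, your identification ``$a_1 a_{12}\geq 0$ $\Leftrightarrow$ monotone in $x_1$'' is not quite right: monotonicity in $x_1$ means $a_1(a_1+a_{12})\geq 0$, not $a_1 a_{12}\geq 0$. What is true is that $a_1 a_{12}\geq 0$ \emph{implies} monotonicity in $x_1$ (if $a_1$ and $a_{12}$ have the same sign, so do $a_1$ and $a_1+a_{12}$), and the leftover monotone-in-$x_1$ cases with $a_1 a_{12}<0$ automatically satisfy $|a_{12}|\leq|a_1|$, hence the third clause. So your partition into ``first two clauses $=$ monotone in some variable'' versus ``third clause $=$ genuinely non-monotone'' is slightly off, but the case analysis still closes once you track this correctly. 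Second, in Example~\ref{ex:5dsf67} one has $a_{12}=f(1,1)-f(1,0)-f(0,1)+f(0,0)=3-2-4+1=-2$, not $-4$; the third clause is then $|a_{12}|=2\leq 3=|a_2|$, which is still satisfied. Neither issue affects the validity of your plan.
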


Lemma~\ref{lemma local ==> 2-permutable} shows that the class of $2$-locally monotone pseudo-Boolean functions is a subclass of that of pseudo-Boolean functions which have
$2$-permutable lattice derivatives. Example~\ref{ex:5dsf67} then shows that this inclusion is strict.
%
%
%
%
%
On the other hand, according to Theorem~\ref{thm Boolean 2-local iff 2-permutable}, a Boolean function is $2$-locally monotone if and only if it has
$2$-permutable lattice derivatives. Example~\ref{ex n-permutable but only 3-local} below shows that the analogous equivalence does not hold for
$p>2$. However, $p$-local monotonicity implies $p$-permutability of lattice derivatives of any pseudo-Boolean function (see Theorem~\ref{thm
p-local ==> p-permutable} below). To this extent, let us first study how the degree of local monotonicity is affected by taking lattice
derivatives.

\begin{lemma}\label{lemma monotone derivatives}
If $f\colon \mathbb{B}^{n}\to\mathbb{R}$ is monotone, then $\wedge_{j}f$ and $\vee_{j}f$ are also monotone for all $j\in [n]$.
\end{lemma}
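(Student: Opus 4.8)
The plan is to reduce the claim to the elementary fact that the binary operations $\wedge$ and $\vee$ on $\R$ are order-preserving in each argument, together with the observation that sections of a monotone function inherit its monotonicity direction. First I would recall that the $j$-th variable is inessential in both $\wedge_j f$ and $\vee_j f$, so it suffices to verify monotonicity in each variable $x_k$ with $k\neq j$.

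Fix such a $k$. Since $f$ is monotone, it is either isotone or antitone in $x_k$; suppose first that it is isotone, the antitone case being symmetric. I would introduce the two functions $g_0(\bfx)=f(\bfx_j^0)$ and $g_1(\bfx)=f(\bfx_j^1)$, so that $\wedge_j f=g_0\wedge g_1$ and $\vee_j f=g_0\vee g_1$ pointwise. Each $g_a$ is obtained from $f$ by fixing the $j$-th coordinate, hence is again isotone in $x_k$, i.e.\ $g_a(\bfx_k^0)\leq g_a(\bfx_k^1)$ for $a\in\{0,1\}$ and all $\bfx\in\B^n$.

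Now I would invoke the monotonicity of $\min$ and $\max$: from $g_0(\bfx_k^0)\leq g_0(\bfx_k^1)$ and $g_1(\bfx_k^0)\leq g_1(\bfx_k^1)$ it follows that $g_0(\bfx_k^0)\wedge g_1(\bfx_k^0)\leq g_0(\bfx_k^1)\wedge g_1(\bfx_k^1)$, and likewise with $\wedge$ replaced by $\vee$. Thus $\wedge_j f$ and $\vee_j f$ are isotone in $x_k$. In the antitone case the same computation with all inequalities reversed shows that both derivatives are antitone in $x_k$. As $k\neq j$ was arbitrary, both $\wedge_j f$ and $\vee_j f$ are monotone.

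I do not expect any genuine obstacle here; the only point requiring a moment of care is that the two slices $g_0$ and $g_1$ must share the \emph{same} monotonicity direction in $x_k$. This is automatic, because ``isotone in $x_k$'' is the global condition $\Delta_k f\geq 0$ on all of $\B^n$, which is preserved when another coordinate is fixed, so the directions of $g_0$ and $g_1$ cannot disagree.
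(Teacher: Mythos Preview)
Your proof is correct and follows essentially the same approach as the paper: both arguments introduce the two slices $g_a(\bfx)=f(\bfx_j^a)$ (the paper writes $f_j^a$), observe that these inherit the same monotonicity direction in each variable $x_k$ as $f$ does, and then conclude by the isotonicity of $\wedge$ and $\vee$. Your version is somewhat more explicit about the inequality manipulation and about why the directions of $g_0$ and $g_1$ must agree, but the underlying idea is identical.
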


 \begin{proof}
Clearly, if $f$ is monotone, then so are $f^0_j(\mathbf{x})=f(\mathbf{x}^0_j)$ and $f^1_j(\mathbf{x})=f(\mathbf{x}^1_j)$, for all $j\in [n]$.
Moreover, if $f$ is isotone (resp.\ antitone) in $x_k$, then both $f^0_j$ and $f^1_j$ are also isotone (resp.\ antitone) in $x_k$.
Since $\wedge$ and $\vee$ are isotone functions, we have that for every $j\in [n]$, both $\wedge_{j}f(\mathbf{x})=f^0_j(\mathbf{x})\wedge
f^1_j(\mathbf{x})$ and $\vee_{j}f(\mathbf{x})=f^0_j(\mathbf{x})\vee f^1_j(\mathbf{x})$ are monotone.
\end{proof}

\begin{theorem}\label{thm (p-1)-locally monotone derivatives}
If $f\colon \mathbb{B} ^{n}\to\mathbb{R}$ is $p$-locally monotone, then $\wedge_{j}f$ and $\vee_{j}f$ are $(p-1)$-locally monotone
for all $j\in [n]$.
\end{theorem}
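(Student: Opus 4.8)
The plan is to reduce the statement to the section characterisation of local monotonicity provided by Theorem~\ref{thm p-local via sections}, and then to exploit the fact that forming a section and taking a lattice derivative commute, provided the derivation index is not among the fixed coordinates. I would prove the assertion for $\wedge_j f$ in full; the argument for $\vee_j f$ is word-for-word the same with $\wedge$ replaced by $\vee$ and with the join half of Lemma~\ref{lemma monotone derivatives} invoked in place of the meet half. So it will be enough to show that every $(p-1)$-ary section of $\wedge_j f$ is monotone.

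First I would fix an arbitrary $(p-1)$-ary section of $\wedge_j f$, determined by a set $T\subseteq[n]$ with $|T|=p-1$ and a tuple $\mathbf{a}\in\mathbb{B}^n$, namely $g(\mathbf{x})=\wedge_j f(\mathbf{a}_T^{\mathbf{x}})$ for $\mathbf{x}\in\mathbb{B}^T$. Since the $j$-th variable is inessential in $\wedge_j f$, whenever $j\in T$ the function $g$ does not depend on $x_j$, so its monotonicity is equivalent to that of the restriction obtained by fixing $x_j$; this lets me reduce to the case $j\notin T$ (now with $|T|\le p-1$), which is the genuinely informative one.

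The key step will be the identity $g=\wedge_j h$, where $h\colon\mathbb{B}^{T\cup\{j\}}\to\mathbb{R}$ is the section of $f$ defined by $h(\mathbf{y})=f(\mathbf{a}_{T\cup\{j\}}^{\mathbf{y}})$. Indeed, because $j\notin T$, setting the $j$-th coordinate of $\mathbf{a}_T^{\mathbf{x}}$ to $0$ or to $1$ produces exactly $\mathbf{a}_{T\cup\{j\}}^{(\mathbf{x},0)}$ and $\mathbf{a}_{T\cup\{j\}}^{(\mathbf{x},1)}$, so that $\wedge_j f(\mathbf{a}_T^{\mathbf{x}})=h(\mathbf{x},0)\wedge h(\mathbf{x},1)=\wedge_j h(\mathbf{x})$. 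Now $h$ is a section of $f$ of arity $|T|+1\le p$. Since $p$-local monotonicity entails $(|T|+1)$-local monotonicity, Theorem~\ref{thm p-local via sections} guarantees that $h$ is monotone, and then Lemma~\ref{lemma monotone derivatives} yields that $\wedge_j h$, and hence $g$, is monotone. This gives the conclusion.

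The main obstacle I expect is purely one of careful bookkeeping rather than of substance: one must correctly handle the inessential derivation variable (the reduction to $j\notin T$) and, above all, verify that taking the section and applying $\wedge_j$ really do commute precisely when $j$ is the derivation index, not a fixed index. Once this commutation is cleanly established, the preservation of monotonicity under lattice derivatives in Lemma~\ref{lemma monotone derivatives} delivers the result immediately.
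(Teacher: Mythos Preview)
Your proposal is correct and follows essentially the same route as the paper's proof: reduce via Theorem~\ref{thm p-local via sections} to showing that every $(p-1)$-ary section of the lattice derivative is monotone, rewrite such a section as the lattice derivative of a section of $f$ of arity at most $p$, and then apply Lemma~\ref{lemma monotone derivatives}. The only cosmetic differences are that the paper treats $\vee_j f$ rather than $\wedge_j f$, and it handles the two cases $j\in S$ and $j\notin S$ uniformly (noting the resulting section of $f$ has arity $p-1$ or $p$) instead of first reducing to $j\notin T$ as you do.
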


\begin{proof}
Suppose that $f\colon\mathbb{B}^{n}\to\mathbb{R}$ is $p$-locally monotone. By Theorem~\ref{thm p-local via sections}, it suffices to show that
all $(p-1)$-ary sections of $\wedge_{j}f$ and $\vee_{j}f$ are monotone. We consider only $\vee_{j}f$, the other case can be dealt with in a
similar way.

Let $h$ be a $(p-1)$-ary section of $\vee_{j}f$ defined by $h(\mathbf{x})=\vee_{j}f(\mathbf{a}_{S}^{\mathbf{x}})$ for all $\mathbf{x}\in
\mathbb{B}^{S}$, where $\mathbf{a}\in \mathbb{B}^{n}$ and $S\subseteq [n]$ is a $(p-1)$-subset. Let $T=S\cup\{j\}$, and let us define
$g\colon\mathbb{B}^{T}\to\mathbb{R}$ by $g(\mathbf{y})=f(\mathbf{a}_{T}^{\mathbf{y}})$ for all $\mathbf{y}\in \mathbb{B}^{T}$. Clearly, $g$ is a
section of $f$, and the arity of $g$ is either $p-1$ or $p$ , depending on whether $j$ belongs to $S$ or not. A simple calculation shows that
$h(\mathbf{y}|_{S})=\vee _{j}g(\mathbf{y})$ for all $\mathbf{y}\in\mathbb{B}^{T}$, where $\mathbf{y}|_{S}$ stands for the restriction of
$\mathbf{y}$ to $S$. This means that if $j\notin S$, then $h$ can be obtained from $\vee_{j}g$ by deleting its inessential $j$-th variable, and
$h=\vee _{j}g$ if $j\in S$. Since $f$ is $p$-locally monotone, $g$ is monotone by Theorem~\ref{thm p-local via sections}, thus we can conclude
with the help of Lemma~\ref{lemma monotone derivatives} that $h$ is monotone as well.
\end{proof}

\begin{corollary}\label{cor (p-l)-locally monotone derivatives}
Let $0\leq \ell <p\leq n$. If $f\colon \mathbb{B}^{n}\rightarrow \mathbb{R}$ is $p$-locally monotone, then, for every $\ell$-subset
$\{k_{1},\ldots ,k_{\ell }\}\subseteq [n]$ and every choice of the operators $O_{k_{i}}\in \{\vee_{k_{i}},\wedge_{k_{i}}\}$ $(i=1,\ldots
,\ell)$, the function $O_{k_{1}}\cdots{\,}O_{k_{\ell}}f$ is $(p-\ell)$-locally monotone. In particular, if $\ell \leq p-2$, then $ O_{k_{1}}\cdots{\,}O_{k_{\ell }}f$ is locally monotone.
\end{corollary}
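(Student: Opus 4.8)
The plan is to obtain Corollary~\ref{cor (p-l)-locally monotone derivatives} from Theorem~\ref{thm (p-1)-locally monotone derivatives} by a straightforward induction on $\ell$, the number of lattice derivatives applied. The base case $\ell=0$ is trivial: the empty composition of operators is just $f$ itself, which is $p$-locally monotone by hypothesis, and this matches the claimed conclusion since $p-\ell=p$.

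For the inductive step, suppose the statement holds for some $\ell$ with $0\le\ell<p-1$, and consider $\ell+1$ operators $O_{k_1},\ldots,O_{k_{\ell+1}}$ indexed by an $(\ell+1)$-subset $\{k_1,\ldots,k_{\ell+1}\}\subseteq[n]$. I would write the composition as $O_{k_1}\bigl(O_{k_2}\cdots O_{k_{\ell+1}}f\bigr)$. By the induction hypothesis applied to the $\ell$ operators $O_{k_2},\ldots,O_{k_{\ell+1}}$, the function $g:=O_{k_2}\cdots O_{k_{\ell+1}}f$ is $(p-\ell)$-locally monotone. Since $\ell<p-1$ gives $p-\ell\ge 2$, I may apply Theorem~\ref{thm (p-1)-locally monotone derivatives} to $g$ (with $p$ there replaced by $p-\ell$): its single lattice derivative $O_{k_1}g$—whether a meet or a join derivative—is $\bigl((p-\ell)-1\bigr)$-locally monotone, i.e.\ $\bigl(p-(\ell+1)\bigr)$-locally monotone. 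This is exactly the assertion for $\ell+1$, completing the induction and establishing the first claim for all $\ell<p$.

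The final sentence is then immediate: taking $\ell\le p-2$ yields $p-\ell\ge 2$, so $O_{k_1}\cdots O_{k_\ell}f$ is $2$-locally monotone, which is precisely what we call locally monotone. No genuine obstacle arises here; the only point requiring a moment's care is the bookkeeping, namely checking that the arithmetic of the parameters stays in the admissible range so that Theorem~\ref{thm (p-1)-locally monotone derivatives} may legitimately be invoked at each step (this needs $p-\ell\ge 2$, equivalently $\ell\le p-2$, which is guaranteed by the inductive hypothesis range $\ell<p-1$). Since each application of the theorem handles an arbitrary choice between $\wedge_{k_1}$ and $\vee_{k_1}$ uniformly, no case distinction on the operators is needed.
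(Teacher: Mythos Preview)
Your proof is correct and is precisely the argument the paper has in mind: the corollary is stated immediately after Theorem~\ref{thm (p-1)-locally monotone derivatives} without proof, and the intended justification is exactly this iteration of the theorem $\ell$ times. Your careful check that $p-\ell\ge 2$ at each inductive step (so that the theorem applies with a meaningful parameter) is the only subtlety, and you handle it correctly.
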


\begin{remark}
We will see in Example~\ref{ex f p-local Vf (p-1)-local} of Section~\ref{sect symmetric} that Theorem~\ref{thm (p-1)-locally monotone
derivatives} cannot be sharpened, i.e., the lattice derivatives of a $p$-locally monotone function are not necessarily $p$-locally monotone,
not even in the case of Boolean functions.
\end{remark}

With the help of Corollary \ref{cor (p-l)-locally monotone derivatives}
 we can now prove the promised implication between $p$-local monotonicity and $p$-permutability of lattice
derivatives, thus generalizing Lemma~\ref{lemma local ==> 2-permutable}.

\begin{theorem}\label{thm p-local ==> p-permutable}
If $f\colon \mathbb{B}^{n}\rightarrow \mathbb{R}$ is $p$-locally monotone, then it has $p$-permutable lattice derivatives.
\end{theorem}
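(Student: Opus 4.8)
The plan is to reduce the claim that $f$ has $p$-permutable lattice derivatives to a statement about a single pair of adjacent operators, which can then be settled by an already-established special case. Since any permutation $\pi\in S_p$ is a product of adjacent transpositions, it suffices to prove that for any $p$-subset $\{k_1,\dots,k_p\}\subseteq[n]$, any choice of operators $O_{k_i}\in\{\wedge_{k_i},\vee_{k_i}\}$, and any index $m\in\{1,\dots,p-1\}$, swapping the $m$-th and $(m+1)$-th operators leaves the composite unchanged:
\begin{equation*}
O_{k_1}\cdots O_{k_{m-1}}\,O_{k_m}O_{k_{m+1}}\,O_{k_{m+2}}\cdots O_{k_p}f ~=~ O_{k_1}\cdots O_{k_{m-1}}\,O_{k_{m+1}}O_{k_m}\,O_{k_{m+2}}\cdots O_{k_p}f.
\end{equation*}
Once each adjacent swap is justified, composing them realizes every $\pi\in S_p$, giving the full permutability identity.

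First I would isolate the inner operators. Set $\ell=m-1$ and let $g=O_{k_{m+2}}\cdots O_{k_p}f$, so that the two sides of the displayed equation above are $O_{k_1}\cdots O_{k_{m-1}}$ applied to $O_{k_m}O_{k_{m+1}}g$ and to $O_{k_{m+1}}O_{k_m}g$ respectively. By Proposition~\ref{prop basic properties of lattice derivatives}$(iii)$ the two join (or two meet) derivatives already commute, so the only nontrivial case is when one of $O_{k_m},O_{k_{m+1}}$ is a meet and the other a join. Thus it is enough to show $\vee_{k_{m+1}}\wedge_{k_m}g=\wedge_{k_m}\vee_{k_{m+1}}g$ (and the symmetric variant), after which applying the common outer prefix $O_{k_1}\cdots O_{k_{m-1}}$ to both equal functions yields equal results.

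The key point is that this inner identity is exactly the conclusion of Lemma~\ref{lemma local ==> 2-permutable}, \emph{provided} the relevant function is locally monotone. Here I would invoke Corollary~\ref{cor (p-l)-locally monotone derivatives}: since $f$ is $p$-locally monotone and $g=O_{k_{m+2}}\cdots O_{k_p}f$ is obtained by applying $p-m-1$ lattice derivative operators with indices distinct from $k_m,k_{m+1}$, the function $g$ is $(p-(p-m-1))=(m+1)$-locally monotone. As $m\geq 1$, we have $m+1\geq 2$, so $g$ is in particular $2$-locally (i.e., locally) monotone. Lemma~\ref{lemma local ==> 2-permutable} then applies directly to $g$ with the two indices $k_m\neq k_{m+1}$, giving $\vee_{k_{m+1}}\wedge_{k_m}g=\wedge_{k_m}\vee_{k_{m+1}}g$ as required.

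The main obstacle to watch is the bookkeeping that guarantees $g$ is at least locally monotone \emph{after} peeling off only the operators that sit to the right of the swapped pair, rather than all $p-2$ of the remaining operators at once. One must apply Corollary~\ref{cor (p-l)-locally monotone derivatives} to $f$ with $\ell=p-m-1$ and verify the count $p-\ell=m+1\geq 2$ holds precisely because $m$ ranges only over $\{1,\dots,p-1\}$; the outer operators $O_{k_1},\dots,O_{k_{m-1}}$ are then applied uniformly to both sides and need no monotonicity hypothesis, since they act on two functions already shown to be equal. Care is also needed that all indices $k_1,\dots,k_p$ are distinct so that the commutation lemma and the derivative-index disjointness both apply; this is built into the hypothesis that $\{k_1,\dots,k_p\}$ is a $p$-subset.
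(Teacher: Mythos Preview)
Your proposal is correct and follows essentially the same approach as the paper's proof: reduce to adjacent transpositions, set $g=O_{k_{m+2}}\cdots O_{k_p}f$, invoke Corollary~\ref{cor (p-l)-locally monotone derivatives} to ensure $g$ is locally monotone, and then apply Lemma~\ref{lemma local ==> 2-permutable} to commute the remaining pair. Your write-up is in fact slightly more explicit than the paper's (you track the exact degree $m+1$ of local monotonicity of $g$ and note the trivial same-type case), but the structure and key lemmas are identical.
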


\begin{proof}
Let $f\colon \mathbb{B}^{n}\rightarrow \mathbb{R}$ be a $p$-locally monotone function, let $\{k_{1},\ldots ,k_{p}\}$ be a $p$-subset of $[n]$,
and let $O_{k_{i}}\in \left\{ \wedge _{k_{i}},\vee _{k_{i}}\right\}$ for $i=1,\ldots ,p$. We need to show that for any permutation $\pi \in
S_{p}$ the following identity holds:
\begin{equation*}
O_{k_{1}}\cdots{\,}O_{k_{p}}f ~=~ O_{k_{\pi (1)}}\cdots{\,}O_{k_{\pi (p)}}f.
\end{equation*}
Since $S_{p}$ is generated by transpositions of the form $(i~i+1)$, it suffices to prove that
\begin{equation*}
O_{k_{1}}\cdots{\,}O_{k_{i-1}}O_{k_{i}}O_{k_{i+1}}O_{k_{i+2}}\cdots{\,}O_{k_{p}}f ~=~ O_{k_{1}}\cdots{\,}O_{k_{i-1}}O_{k_{i+1}}O_{k_{i}}O_{k_{i+2}}\cdots{\,}O_{k_{p}}f,
\end{equation*}
and for this it is sufficient to verify that
\begin{equation}
O_{k_{i}}O_{k_{i+1}}g ~=~ O_{k_{i+1}}O_{k_{i}}g,  \label{eq OOg=OOg}
\end{equation}
where $g$ stands for the function $O_{k_{i+2}}\cdots{\,}O_{k_{p}}f$. From Corollary~\ref{cor (p-l)-locally monotone derivatives} it follows that
$g$ is locally monotone, and then Lemma~\ref{lemma local ==> 2-permutable} proves (\ref{eq OOg=OOg}) if one of $O_{k_{i}},O_{k_{i+1}}$ is a meet
and the other is a join derivative. (If both are meet or both are join, then (\ref{eq OOg=OOg}) is trivial.)
\end{proof}

A natural question regarding lattice derivatives is whether a function can be reconstructed from its derivatives. As the next theorem shows,
the answer is positive for almost all functions.

\begin{theorem}\label{thm reconstruction from derivatives}
Let $f,g\colon \mathbb{B}^{n}\rightarrow \mathbb{R}$ be pseudo-Boolean functions such that for all $k\in [n]$ we have $\vee_{k}f=\vee_{k}g$ and
$\wedge_{k}f=\wedge_{k}g$. Then either $f=g$ or there exists a one-to-one function $\alpha \colon \B\to\mathbb{R}$ such that
$f(\mathbf{x})=\alpha (x_{1}\oplus \cdots \oplus x_{n})$ and $g(\mathbf{x})=\alpha (x_{1}\oplus \cdots \oplus x_{n}\oplus 1)$ for all
$\mathbf{x}\in \mathbb{B}^{n}$.
\end{theorem}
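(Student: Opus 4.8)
The plan is to translate the hypothesis into a combinatorial condition on the hypercube graph and then invoke its connectivity. First I would note that for two real numbers the minimum and maximum of the pair determine the pair itself; hence the assumption that $\vee_k f=\vee_k g$ and $\wedge_k f=\wedge_k g$ for all $k\in[n]$ is equivalent to the assertion that for every $k\in[n]$ and every $\mathbf{x}\in\mathbb{B}^n$ the two-element multisets of values coincide,
$$
\{f(\mathbf{x}_k^0),f(\mathbf{x}_k^1)\}=\{g(\mathbf{x}_k^0),g(\mathbf{x}_k^1)\}.
$$
In graph-theoretic terms: along every edge of $\mathbb{B}^n$ (a pair of tuples differing in a single coordinate), $f$ and $g$ realize the same unordered pair of values.

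Next I would consider the disagreement set $D=\{\mathbf{x}\in\mathbb{B}^n:f(\mathbf{x})\neq g(\mathbf{x})\}$ and show it is a union of connected components of the hypercube. Fix an edge $\{u,v\}$. If $f(u)=g(u)$, then cancelling this common value in $\{f(u),f(v)\}=\{g(u),g(v)\}$ yields $f(v)=g(v)$; if instead $f(u)\neq g(u)$, then $f(u)$ must be matched to $g(v)$ and $g(u)$ to $f(v)$, which forces $f(v)\neq g(v)$. In either case $u\in D\Leftrightarrow v\in D$, so $D$ is closed under adjacency. Since $\mathbb{B}^n$ is connected, $D=\emptyset$ or $D=\mathbb{B}^n$; the former gives $f=g$.

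It then remains to treat $D=\mathbb{B}^n$. Here every edge is a \emph{swap}: for adjacent $u,v$ we have $f(u)=g(v)$, $g(u)=f(v)$, and (since $v\in D$) $f(u)\neq f(v)$. Fixing $\mathbf{x}$ and letting $\mathbf{y}$ range over its neighbors, the relation $g(\mathbf{y})=f(\mathbf{x})$ shows that $g$ is constantly equal to $f(\mathbf{x})$ on the whole neighborhood of $\mathbf{x}$. Consequently any two tuples at Hamming distance $2$ share a common neighbor and carry equal $f$-values; since the tuples of a fixed parity are joined by distance-$2$ steps (for $n\geq 2$, the case $n=1$ being immediate), $f$ is constant on each of the two parity classes. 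Defining $\alpha\colon\mathbb{B}\to\mathbb{R}$ so that $\alpha(0)$ and $\alpha(1)$ are the values of $f$ on the even and odd classes---distinct by the swap property---gives $f(\mathbf{x})=\alpha(x_1\oplus\cdots\oplus x_n)$ with $\alpha$ one-to-one, and then $g(\mathbf{x})=f(\mathbf{y})=\alpha(x_1\oplus\cdots\oplus x_n\oplus 1)$ for any neighbor $\mathbf{y}$ of $\mathbf{x}$.

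The main obstacle is the passage, once $D=\mathbb{B}^n$, from the local ``all neighbors share one $g$-value'' property to constancy of $f$ on each parity class; this amounts precisely to connectivity of the halved cube and is the only genuinely non-formal step. The multiset reformulation of the hypothesis, the edge dichotomy for $D$, and the closing verifications that $\alpha$ is well defined and injective and that the two displayed formulas hold are then routine.
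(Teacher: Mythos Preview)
Your proof is correct, and it takes a genuinely different route from the paper's. The paper frames the result as a reconstruction problem: given only the lattice derivatives, it argues (via a playful Alice--Bob narrative) that the value of $f$ at a single vertex can be pinned down whenever the range of $f$ has at least three elements (two incident edges then yield distinct two-element sets whose intersection is a singleton) or some edge carries equal values; otherwise $f$ is two-valued and alternates along every edge, forcing the parity form. Your argument, by contrast, works symmetrically with the pair $(f,g)$ from the outset: the disagreement set $D$ is shown to be edge-closed, so connectivity gives the all-or-nothing dichotomy, and in the ``all'' case the swap relation $g(\mathbf{y})=f(\mathbf{x})$ on every edge immediately yields constancy of $f$ on parity classes via the halved-cube connectivity. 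Your approach is arguably cleaner and more direct (no case split on $|\mathrm{ran}\,f|$), while the paper's approach has the slight conceptual advantage of isolating, as a property of $f$ alone, exactly when the derivatives fail to determine it.
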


\begin{proof}
To make the proof more vivid, we present it through the analysis of the following game. Alice picks a secret function $f\colon
\mathbb{B}^{n}\rightarrow \mathbb{R}$, and Bob tries to identify this function by asking the values of its lattice derivatives. If he can do
this, then he wins, otherwise Alice is the winner. We show that Bob has a winning strategy unless $f$ is a function of the special form in the
statement of the theorem.

Let us regard $\mathbb{B}^{n}$ as the set of vertices of the $n$-dimensional cube, and let Alice write the values of $f$ to the corresponding
vertices. Now the possible winning strategy for Bob is based on the following four basic observations.

\begin{enumerate}
\item[$1.$] \emph{Bob can determine the unordered pair of numbers written to the endpoints of any edge of the cube.} Indeed, the endpoints of an
edge are of the form $\mathbf{x}_{{k}}^{0},\mathbf{x}_{{k}}^{1}$, and it is clear that $\left\{ \wedge _{{k}}f(\mathbf{x}),\vee
_{{k}}f(\mathbf{x})\right\} = \{f(\mathbf{x}_{{k}}^{0}),f(\mathbf{x}_{{k}}^{1})\}$.

\item[$2.$] \emph{If Bob can find the value of }$f$\emph{\ at one point, then he can win.} According to the previous observation, knowing the
value at one vertex of the cube, Bob can figure out the values written to the neighboring vertices. Since the graph of the cube is connected, he
can determine all values of $f$ this way.

\item[$3.$] \emph{If }$f(\mathbf{x}_{\boldsymbol{k}}^{0})=f(\mathbf{x}_{{k}}^{1})$ \emph{for some }$\mathbf{x}\in \mathbb{B}^{n}$, ${k}\in [n]$
\emph{, then Bob can win.} This follows immediately from the first two observations.

\item[$4.$] \emph{If the range of }$f$\emph{\ contains at least three elements, then Bob can win.} We can suppose that the previous observation
does not apply, i.e., for every edge Bob detects a two-element set. If $f$ takes on at least three different values, then, by the connectedness
of the cube, there exists a vertex $\mathbf{x}$ and two edges incident with this vertex such that the two-element sets $E_{1}$ and $E_{2}$
corresponding to these edges are different. Then $E_{1}\cap E_{2}$ must be a one-element set\footnote{If $E_{1}\cap E_{2}$ is empty, then Alice
is cheating!} containing the value of $f( \mathbf{x}) $, and then Bob can win as explained in the second observation.
\end{enumerate}

From these observations we can conclude that Bob has a winning strategy unless the range of $f$ contains exactly two numbers and
$f(\mathbf{x}_{{k}}^{0})\neq f(\mathbf{x}_{{k}}^{1})$, for all $\mathbf{x}\in \mathbb{B}^{n},{k}\in [n]$. This means that $f$ is of the
following form for some $u\neq v\in \mathbb{R}$:
\begin{equation*}
f(\mathbf{x}) ~=~ \left\{ \!\!
\begin{array}{ll}
u{\,}, & \text{if }\left\vert \mathbf{x}\right\vert \text{ is even{\,};} \\
v{\,}, & \text{if }\left\vert \mathbf{x}\right\vert \text{ is odd{\,},}
\end{array}
\right.
\end{equation*}
where $|\bfx|=\sum_{i=1}^nx_i$. In other words, $f(\mathbf{x})=\alpha (x_{1}\oplus \cdots \oplus x_{n})$, where $\alpha (0)=u,\alpha (1)=v$. In
this case Bob can determine $f$ only
up to interchanging $u$ and $v$, i.e., he cannot distinguish $f$ from $g(%
\mathbf{x})=\alpha (x_{1}\oplus \cdots \oplus x_{n}\oplus 1)$, so he has only $50\%$ chance to win. (Indeed, $f$ and $g$ have the same lattice
derivatives, namely their meet derivatives are all constant $u\wedge v $, while their join derivatives are all constant $u\vee v$.)
\end{proof}

The following theorem shows that, as in the case of local monotonicity, the classes of functions having permutable lattice derivatives form a
chain under inclusion.

\begin{theorem}\label{thm (p+1)-permutable ==> p-permutable}
If $f\colon \mathbb{B}^{n}\rightarrow \mathbb{R}$ has $(p+1)$-permutable lattice derivatives, then $f$ has $p$-permutable lattice derivatives.
\end{theorem}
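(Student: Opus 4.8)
The plan is to reduce the statement to the case $n=p+1$ and then to mimic the ``reduction to adjacent transpositions'' used in the proof of Theorem~\ref{thm p-local ==> p-permutable}, the new ingredient being a device that turns the inequality of Proposition~\ref{prop basic properties of lattice derivatives}$(iv)$ into an equality.

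First I would reduce to $n=p+1$. By Fact~\ref{fact:fd5gfd45}, having $(p+1)$-permutable lattice derivatives means that every $(p+1)$-ary section is permutable, while having $p$-permutable ones means that every $p$-ary section is permutable. Since every $p$-ary section of $f$ is itself a section of some $(p+1)$-ary section of $f$, it suffices to prove the theorem when $n=p+1$, that is, to show that a function $h\colon\B^{p+1}\to\R$ with permutable lattice derivatives has $p$-permutable lattice derivatives. So fix such an $h$, a $p$-element set $K\subseteq[p+1]$ with complement $\{m\}$, and operators $O_k\in\{\wedge_k,\vee_k\}$ for $k\in K$. Because $S_p$ is generated by adjacent transpositions, it is enough to show that interchanging two adjacent operators does not change the composite. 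Writing such a composite as $A\,O_aO_b\,w$, where $A=O_{d_1}\cdots O_{d_s}$ collects the leading operators, $w=O_{c_1}\cdots O_{c_t}h$ the trailing ones, and $O_a,O_b$ are the two operators to be swapped, I must prove $A\,O_aO_b\,w=A\,O_bO_a\,w$. If $O_a$ and $O_b$ are both meets or both joins, this is Proposition~\ref{prop basic properties of lattice derivatives}$(iii)$. If they are of opposite type, then Proposition~\ref{prop basic properties of lattice derivatives}$(iv)$ gives $P\le Q$ pointwise, where $P$ and $Q$ denote the two products $O_aO_bw$ and $O_bO_aw$ labelled so that $P$ is the smaller one; applying the isotone operators of $A$ and using Proposition~\ref{prop basic properties of lattice derivatives}$(ii)$ yields $AP\le AQ$. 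It remains to establish the reverse inequality, i.e.\ $AP=AQ$.

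The crux is to prepend the one missing operator. Since the operators applied so far exhaust the indices in $K$, both $AP$ and $AQ$ depend only on the variable $x_m$. For $O_m\in\{\wedge_m,\vee_m\}$ the functions $O_m(AP)$ and $O_m(AQ)$ are full compositions of the $p+1$ operators indexed by $[p+1]$ applied to $h$, and they differ only by the adjacent swap $O_aO_b\leftrightarrow O_bO_a$; hence they coincide by the assumed $(p+1)$-permutability of $h$. Choosing $O_m=\wedge_m$ and then $O_m=\vee_m$ shows that the functions $AP$ and $AQ$ of the single variable $x_m$ attain the same minimum and the same maximum. Together with $AP\le AQ$ this forces equality in each of the two coordinates (a strict inequality at either value of $x_m$ would make the extrema disagree), whence $AP=AQ$ and the desired identity follows.

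I expect the main obstacle to be exactly this last upgrade from $\le$ to $=$: Proposition~\ref{prop basic properties of lattice derivatives}$(iv)$ is in general a strict inequality, so two orderings of a meet and a join need not agree unless extra information is supplied. The reduction to $n=p+1$ is precisely what supplies it, since after the operators indexed by $K$ are applied a single free variable $x_m$ survives, and for a function of one variable ``equal minimum and maximum'' together with a pointwise inequality is enough to conclude equality. Prepending the missing operator $O_m$ and invoking $(p+1)$-permutability is what delivers the matching extrema.
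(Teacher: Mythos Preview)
Your argument is correct. The reduction to $n=p+1$ via Fact~\ref{fact:fd5gfd45} is legitimate (the hypothesis forces $p+1\le n$, so $(p+1)$-ary sections exist and every $p$-ary section sits inside one), and the endgame---two functions of a single variable with the same minimum and the same maximum, one pointwise below the other, must coincide---is easily checked by cases.

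The paper proceeds differently: it does not pass to sections but keeps $n$ arbitrary, so after deleting the $p$ inessential variables the two sides of the adjacent-swap identity become $(n-p)$-ary functions $g_1\le g_2$. To conclude $g_1=g_2$ it invokes Theorem~\ref{thm reconstruction from derivatives}: if $g_1\neq g_2$ yet all their lattice derivatives agreed, the pair would have to be the order-incomparable one described there, contradicting $g_1\le g_2$; hence $O_j g_1\neq O_j g_2$ for some fresh index $j$, which violates $(p+1)$-permutability. Your route is more elementary in that it sidesteps Theorem~\ref{thm reconstruction from derivatives} entirely---by forcing the residual arity down to $1$ you handle the needed ``reconstruction plus incomparability'' step by hand. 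The paper's route, conversely, avoids the section reduction and showcases an application of Theorem~\ref{thm reconstruction from derivatives}; your one-variable endgame is exactly its $n-p=1$ instance.
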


\begin{proof}
Let $f\colon \mathbb{B}^{n}\rightarrow \mathbb{R}$ be a function that has $(p+1)$-permutable lattice derivatives. Using the same notation as in
the proof of Theorem~\ref{thm p-local ==> p-permutable}, it suffices to prove that
$$
O_{k_{1}}\cdots{\,}O_{k_{i-1}}O_{k_{i}}O_{k_{i+1}}O_{k_{i+2}}\cdots{\,}O_{k_{p}}f ~=~ O_{k_{1}}\cdots{\,}O_{k_{i-1}}O_{k_{i+1}}O_{k_{i}}O_{k_{i+2}}\cdots{\,}O_{k_{p}}f.
$$
Let $g_{1}$ and $g_{2}$ be the $(n-p)$-ary functions obtained from the left-hand side and from the right-hand side of this equality by deleting
their inessential variables $x_{k_{1}},\ldots ,x_{k_{p}}$. If $O_{k_{i}}=\wedge_{k_{i}},O_{k_{i+1}}=\wedge _{k_{i+1}}$ or
$O_{k_{i}}=\vee_{k_{i}},O_{k_{i+1}}=\vee _{k_{i+1}}$, then $g_{1}=g_{2}$ holds trivially. Let us now assume that
$O_{k_{i}}=\vee_{k_{i}},O_{k_{i+1}}=\wedge _{k_{i+1}} $; the remaining case $O_{k_{i}}=\wedge _{k_{i}},O_{k_{i+1}}=\vee _{k_{i+1}}$ is similar.

By Proposition~\ref{prop basic properties of lattice derivatives}, we have $g_{1}\leq g_{2}$. Since the two (types of) functions given in
Theorem~\ref{thm reconstruction from derivatives} are order-incomparable, if $g_{1}\neq g_{2}$, then the lattice derivatives of $g_{1}$ and
$g_{2}$ cannot all coincide. Thus there exists $j\in [n]\setminus \{k_{1},\ldots ,k_{p}\}$ and $O_{j}\in \left\{ \wedge_{j},\vee_{j}\right\}$
such that $O_{j}g_{1}\neq O_{j}g_{2}$. Taking into account the definition of $g_{1}$ and $g_{2}$, we can rewrite this inequality as
$$
O_{j}O_{k_{1}}\cdots{\,}O_{k_{i-1}}O_{k_{i}}O_{k_{i+1}}O_{k_{i+2}}\cdots{\,}O_{k_{p}}f ~\neq ~O_{j}O_{k_{1}}\cdots{\,}O_{k_{i-1}}O_{k_{i+1}}O_{k_{i}}O_{k_{i+2}}\cdots{\,}O_{k_{p}}f,
$$
which contradicts the fact that $f$ has $(p+1)$-permutable lattice derivatives.
\end{proof}

If $f\colon \mathbb{B}^{n}\rightarrow \mathbb{B}$ is a Boolean function with $p$-permutable lattice derivatives for some $p\geq 2$, then $f$ has
$2$-permutable lattice derivatives by Theorem~\ref{thm (p+1)-permutable ==> p-permutable}, and then Theorem~\ref{thm Boolean 2-local iff 2-permutable} implies that $f$ is
$2$-locally monotone. Unfortunately, nothing more can be said about the degree of local monotonicity of a Boolean function with $p$-permutable
lattice derivatives. Indeed, the next example shows that there exist $n$-ary Boolean functions with $n$-permutable lattice derivatives that are
exactly $2$-locally monotone.

\begin{example}\label{ex n-permutable but only 3-local}
Let $f_{n}\colon \mathbb{B}^{n}\rightarrow \mathbb{B}$ be the function that takes the value $1$ on all tuples of the form
$$
\mathbf{x} ~=~ (\overset{m}{\overbrace{1,\ldots ,1}},0,\ldots ,0)~\text{ with } 0\leq m\leq n,
$$
and takes the value $0$ everywhere else. Using Corollary~\ref{cor Boolean 2-local via sections}, it is not difficult to verify that $f_{n}$ is
$2$-locally monotone. However, if $n\geq 3$, then $f_{n}$ is not $3$-locally monotone, since
\begin{eqnarray*}
\Delta _{2}f( 0,0,0,0,\ldots ,0) &=&-1, \\
\Delta _{2}f( 1,0,1,0,\ldots ,0) &=&\hphantom{-}1.
\end{eqnarray*}
Thus $f_{n}$ is exactly $2$-locally monotone.

We will show by induction on $n$ that $f_{n}$ has $n$-permutable lattice
derivatives. First we compute the meet derivatives%
$$
\wedge _{k}f_{n}(\mathbf{x})=\left\{ \!\!
\begin{array}{cl}
1{\,}, & \text{if }x_{1} = \cdots =x_{k-1}=1\text{ and }x_{k+1}=\cdots =x_{n}=0{\,};
\\
0{\,}, & \text{otherwise}{\,}.
\end{array}
\right.
$$
Since $\wedge_{k}f$ takes the value $1$ only at one tuple, it is monotone. The join derivative $\vee _{k}f_{n}$ is essentially the same as the
function $f_{n-1}$ (up to the inessential $k$-th variable of $\vee_{k}f_{n}$), that is,
\begin{equation}\label{eq join_k of  f_n}
\vee _{k}f_{n}(\mathbf{x}) ~=~ f_{n-1}(x_{1},\ldots ,x_{k-1},x_{k+1},\ldots ,x_{n}).
\end{equation}

Now it follows that if $\left\{ k_{1},\ldots ,k_{n}\right\} =[n]$ and $O_{k_{i}}\in \left\{ \wedge _{k_{i}},\vee _{k_{i}}\right\} $~$
(i=1,\ldots ,n)$, then
\begin{equation}\label{eq S_n-1}
O_{k_{1}}\cdots{\,}O_{k_{n-1}}O_{k_{n}}f ~=~ O_{k_{\pi (1)}}\cdots{\,}O_{k_{\pi (n-1)}}O_{k_{n}}f
\end{equation}
holds for every permutation $\pi \in S_{n-1}$. (If $O_{k_{n}}=\wedge _{k_{n}}$, then we use Theorem~\ref{thm p-local ==> p-permutable} and the
fact that $\wedge _{k_{n}}f$ is monotone, and if $O_{k_{n}}=\vee _{k_{n}}$, then we use (\ref{eq join_k of f_n}) and the induction hypothesis.)
On the other hand, from the $2$-local monotonicity of $f$ we can conclude that
\begin{equation}\label{eq (n-1 n)}
O_{k_{1}}\cdots{\,}O_{k_{n-2}}O_{k_{n-1}}O_{k_{n}}f ~=~ O_{k_{1}}\cdots{\,}O_{k_{n-2}}O_{k_{n}}O_{k_{n-1}}f
\end{equation}
with the help of Theorem~\ref{thm Boolean 2-local iff 2-permutable}. Since $S_{n}$ is generated by $S_{n-1}$ and the transposition $(n-1~n)$, we
see from (\ref{eq S_n-1}) and (\ref{eq (n-1 n)}) that $f$ has $n$-permutable lattice derivatives.
\end{example}

We finish this section with game-theoretic interpretations of the parameterized notions of local
monotonicity and permutability of lattice derivatives.
Identifying $\mathbb{B}^{n}$ with the power set of $[n]$,
we can regard a pseudo-Boolean function $f\colon\mathbb{B}%
^{n}\rightarrow\mathbb{R}$ as a cooperative game, where  $[n]$ is the set of players and
$f\left( C\right) $ is the worth of coalition $C\subseteq [n]$.

The partial derivative $\Delta_{k}f(C)$ gives the (marginal) contribution of the $k$-th
player to coalition $C$. Note that the same player might have a
positive contribution to some coalitions and a negative contribution to other
coalitions. Such a setting can model situations where some players have conflicts, which prevents them
from cooperating. The lattice derivative $\vee_{k}f(C)$
gives the outcome if the $k$-th player acts benevolently and
joins (or leaves) the coalition $C$ only if this increases the worth.
Similarly, $\wedge_{k}f(C)$ represents the outcome if the $k$-th player acts
malevolently.

Games corresponding to locally monotone functions have the property that if
two coalitions are close to each other, then any given player relates in the
same way to these coalitions. More precisely, $f$ is $p$-locally monotone if and only if
whenever two coalitions differ in less than $p$ players, then the contribution of any player is either nonnegative to both coalitions or
it is nonpositive to both.

Finally, let us interpret permutability of lattice derivatives. Let $P$ be a
$p$-subset of $[n]$, and let $C\subseteq [n]\setminus P$. Suppose that
some players of $P$ are benevolent and some of them are malevolent, and
they are asked one by one to join coalition $C$ if they want to. We obtain
the least possible outcome if the malevolent players are asked first, and we
get the greatest outcome if the benevolent players make their choices first.
The function $f$ has $p$-permutable lattice derivatives if and only if these
extremal outcomes coincide, i.e., if the order in which the players make their
choices is irrelevant for every $p$-subset of $[n]$.

\section{Symmetric functions}\label{sect symmetric}

In the previous sections we saw that both notions of local monotonicity and of permutable lattice derivatives lead to two hierarchies of
pseudo-Boolean functions which are related by the fact that each $p$-local monotone class is contained in the corresponding class of functions
having $p$-permutable lattice derivatives. Now, in general this containment is strict. However, under certain assumptions (see, e.g., Theorem~\ref{thm
Boolean 2-local iff 2-permutable}), $p$-local monotonicity is equivalent to $p$-permutability of lattice derivatives. Hence it is natural to ask
for conditions under which these two notions are equivalent.

In this section we provide a partial answer to this problem by focusing on symmetric pseudo-Boolean functions, i.e., functions $f\colon
\mathbb{B}^{n}\rightarrow \mathbb{R}$ that are invariant under all permutations of their variables. Quite surprisingly, in this case the notions
of $p$-local monotonicity and $p$-permutability of lattice derivatives become equivalent.

Symmetric functions of arity $n$ are in a one-to-one correspondence with sequences of real numbers of length $n+1$, where the function
corresponding to the sequence $\alpha =\alpha _{0},\ldots ,\alpha _{n}$ is given by $f(\mathbf{x})=\alpha _{\left\vert \mathbf{x}\right\vert
}~(\mathbf{x}\in \mathbb{B}^{n})$. Clearly, $f$ is isotone if and only if the corresponding sequence is nondecreasing, i.e., $\alpha _{0}\leq
\alpha _{1}\leq \cdots \leq \alpha _{n}$. Similarly, $f$ is antitone if and only if $\alpha _{0}\geq \alpha _{1}\geq \cdots \geq \alpha _{n}$,
and $f$ is monotone if and only if $f$ is either isotone or antitone.\footnote{Since if $f$ is isotone (resp. antitone) in one variable, then it
is isotone (resp. antitone) in all variables.}

It is easy to see that if $f$ is symmetric, then every section of $f$ is also symmetric; moreover, if $f$ corresponds to the sequence $\alpha
=\alpha _{0},\ldots ,\alpha _{n}$, then the $p$-ary sections of $f$ are precisely the symmetric functions corresponding to the
subsequences\footnote{Here by a subsequence we mean a sequence of consecutive entries of the original sequence.} $\alpha _{i},\alpha
_{i+1},\ldots ,\alpha_{i+p}$ of $\alpha $ of length $p+1$. This observation and  Theorem~\ref{thm p-local via sections} lead to the following
description of $p$-locally monotone symmetric pseudo-Boolean functions.

\begin{proposition}\label{prop symmetric p-local via sections}
Let $f\colon \mathbb{B} ^{n}\rightarrow \mathbb{R}$ be a symmetric function corresponding to the sequence $\alpha =\alpha _{0},\ldots
,\alpha_{n}$. Then $f$ is $p$-locally monotone if and only if each subsequence of length $p+1$ of $\alpha $ is either nondecreasing or
nonincreasing.
\end{proposition}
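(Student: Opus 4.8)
The plan is to reduce everything to Theorem~\ref{thm p-local via sections} together with the two facts recalled just before the statement: first, that the $p$-ary sections of a symmetric function $f$ corresponding to $\alpha=\alpha_0,\ldots,\alpha_n$ are exactly the symmetric $p$-ary functions whose associated sequences are the consecutive subsequences $\alpha_i,\alpha_{i+1},\ldots,\alpha_{i+p}$ of length $p+1$; and second, that a symmetric function is monotone if and only if its associated sequence is nondecreasing or nonincreasing. Granting these, the equivalence is immediate, since by Theorem~\ref{thm p-local via sections} the function $f$ is $p$-locally monotone precisely when every $p$-ary section of $f$ is monotone, i.e., precisely when every length-$(p+1)$ consecutive subsequence of $\alpha$ is nondecreasing or nonincreasing.

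So the only substantive point to check is the correspondence between sections and consecutive subsequences. First I would fix an $S$-section with $|S|=p$: choose $\mathbf{a}\in\mathbb{B}^n$ and set $g(\mathbf{x})=f(\mathbf{a}_S^{\mathbf{x}})$ for $\mathbf{x}\in\mathbb{B}^S$. Writing $c=\sum_{i\notin S}a_i$ for the number of ones among the fixed coordinates, symmetry of $f$ gives $g(\mathbf{x})=\alpha_{|\mathbf{x}|+c}$, where $|\mathbf{x}|$ ranges over $\{0,1,\ldots,p\}$. Hence $g$ is the symmetric $p$-ary function associated with the subsequence $\alpha_c,\alpha_{c+1},\ldots,\alpha_{c+p}$. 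Conversely, any starting index $i$ with $0\leq i\leq n-p$ is realized by taking $c=i$, which is possible since we may fix exactly $i$ of the $n-p$ coordinates outside $S$ to $1$. Thus the $p$-ary sections of $f$ are in bijection with the length-$(p+1)$ consecutive subsequences of $\alpha$, as claimed.

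Finally I would combine the pieces: each such section $g$ is itself symmetric, so by the criterion for monotonicity of symmetric functions, $g$ is monotone if and only if its sequence $\alpha_i,\ldots,\alpha_{i+p}$ is nondecreasing or nonincreasing. Applying Theorem~\ref{thm p-local via sections} then yields the stated equivalence. There is no serious obstacle here; the proposition is essentially a bookkeeping consequence of results already in hand, and the only place requiring care is the index shift by $c$, which must be checked to range over all of $0,\ldots,n-p$ so that no consecutive subsequence of length $p+1$ is missed.
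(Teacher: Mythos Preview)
Your proposal is correct and follows exactly the route the paper takes: the paper does not give a separate proof but simply notes that the proposition follows from Theorem~\ref{thm p-local via sections} together with the observation (stated just before the proposition) that the $p$-ary sections of a symmetric $f$ are precisely the symmetric functions corresponding to the consecutive length-$(p+1)$ subsequences of $\alpha$. You have, if anything, supplied more detail than the paper by explicitly verifying the index-shift argument and the range of $c$.
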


Unlike in the previous sections, here it will be more convenient to discard the inessential $k$-th variable of the lattice derivatives
$\wedge_{k}f$ and $\vee_{k}f$, and regard the latter as $(n-1)$-ary functions. Clearly, if $f$ is symmetric, then so are its lattice
derivatives. Moreover, if $f$ corresponds to the sequence $\alpha =\alpha_{0},\ldots ,\alpha_{n}$, then $\wedge_{k}f$ and $\vee_{k}f$ correspond
to the sequences
\begin{eqnarray*}
&&\alpha _{0}\wedge \alpha _{1},\alpha _{1}\wedge \alpha _{2},\ldots ,\alpha
_{n-1}\wedge \alpha _{n}~\text{ and} \\
&&\alpha _{0}\vee \alpha _{1},\alpha _{1}\vee \alpha _{2},\ldots ,\alpha _{n-1}\vee \alpha _{n}{\,},
\end{eqnarray*}
respectively, for all $k\in [n]$. Since these sequences do not depend on $k$, we will write $\wedge f$ and $\vee f$ instead of $\wedge _{k}f$
and $\vee_{k}f$, and we will abbreviate
\[
 \underbrace{\wedge \cdots \wedge}_{\ell} f\quad \text{ and } \quad \underbrace{\vee \cdots \vee}_{\ell} f
\]
by $\wedge^{\ell }f$ and $\vee^{\ell }f$, respectively.

The next example shows that Theorem~\ref{thm (p-1)-locally monotone derivatives} cannot be sharpened.

\begin{example}\label{ex f p-local Vf (p-1)-local}
Let $f\colon \mathbb{B}^{n}\rightarrow \mathbb{B}$ be the symmetric function corresponding to the sequence
$$
\alpha  ~=~ 0,0,\overset{p}{\overbrace{1,\ldots ,1}},\overset{p}{\overbrace{0,\ldots ,0}},1,1
$$
where $n=2p+4$ and $p\geq 2$. It follows from Proposition~\ref{prop symmetric p-local via sections} that $f$ is exactly $p$-locally monotone. To
compute $\wedge f$, it is handy to construct a table whose first row contains the sequence $\alpha $, and in the second row we write $\alpha
_{i}\wedge \alpha _{i+1}$ between $\alpha _{i}$ and $\alpha _{i+1}$:
$$
\begin{tabular}{rc@{}c@{}c@{}c@{}c@{}c@{}c@{}c@{}c@{}c@{}c@{}c@{}c@{}c@{}c@{}c@{}c@{}c@{}c@{}c@{}c@{}c@{}c@{}c@{}c@{}c@{}c@{}c@{}c@{}c@{}c}
$f:$ & $0$ &  & $0$ &  & $1$ &  & $1$ &  & $\cdots $ &  & $1$ &  & $1$ &  & $%
0$ &  & $0$ &  & $\cdots $ &  & $0$ &  & $0$ &  & $1$ &  & $1$ &  &  &  &
\\
$\wedge f:$ &  & $0$ &  & $0$ &  & $1$ &  & $1$ & $~\cdots \,~$ & $1$ &  & $%
1 $ &  & $0$ &  & $0$ &  & $0$ & $~\cdots ~$ & $0$ &  & $0$ &  & $0$ &  & $1$ &  &  &  &  &
\end{tabular}
$$
Thus $\wedge f$ corresponds to the sequence
$$
0,0,\overset{p-1}{\overbrace{1,\ldots ,1}},\overset{p+1}{\overbrace{0,\ldots ,0}},1,
$$
and a similar calculation yields that $\vee f$ corresponds to the sequence
$$
0,\overset{p+1}{\overbrace{1,\ldots ,1}},\overset{p-1}{\overbrace{0,\ldots ,0}},1,1.
$$
Now Proposition~\ref{prop symmetric p-local via sections} shows that $\wedge f$ and $\vee f$ are exactly $(p-1)$-locally monotone.
\end{example}

\begin{remark}
Example~\ref{ex f p-local Vf (p-1)-local} shows that the degree of local monotonicity can decrease, when taking lattice derivatives, and Theorem~\ref{thm (p-1)-locally
monotone derivatives} states that it can decrease by at most one. Other examples can be found to illustrate the cases when this degree stays the
same, or even increases. For instance, consider the function $f(\mathbf{x})=x_{1}\oplus \cdots \oplus x_{n}$, which is not even $2$-locally
monotone, but its lattice derivatives are constant.
\end{remark}

We conclude this section by proving that for symmetric functions the notions of $p$-local monotonicity and $p$-permutability of lattice
derivatives coincide.

\begin{theorem}\label{thm symmetric p-local iff p-permutable}
If $f\colon \mathbb{B}^{n}\rightarrow \mathbb{R}$ is symmetric, then $f$ is $p$-locally monotone if and only if $f$ has
$p$-permutable lattice derivatives.
\end{theorem}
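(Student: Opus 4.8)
The plan is to prove the two implications separately. The forward direction, that $p$-local monotonicity implies $p$-permutability of lattice derivatives, is already available as Theorem~\ref{thm p-local ==> p-permutable} and holds for arbitrary pseudo-Boolean functions, so symmetry plays no role there. The entire content is the converse: for a symmetric $f$, $p$-permutability of lattice derivatives forces $p$-local monotonicity. I would prove this by induction on $p$, keeping the arity $n$ arbitrary, and using throughout Proposition~\ref{prop symmetric p-local via sections}: writing $\alpha=\alpha_0,\dots,\alpha_n$ for the sequence of $f$, "$f$ is $p$-locally monotone" just says that every block $\alpha_i,\dots,\alpha_{i+p}$ of length $p+1$ is nondecreasing or nonincreasing. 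Recall also that for a symmetric function the lattice derivatives are again symmetric, with sequences $\gamma_j=\alpha_j\wedge\alpha_{j+1}$ and $\delta_j=\alpha_j\vee\alpha_{j+1}$.

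For the base case $p=2$ I would compute, for three consecutive entries $a=\alpha_{i}$, $b=\alpha_{i+1}$, $c=\alpha_{i+2}$, the two mixed second derivatives. A single application of distributivity gives $(a\wedge b)\vee(b\wedge c)=b\wedge(a\vee c)$ and $(a\vee b)\wedge(b\vee c)=b\vee(a\wedge c)$, and since always $b\wedge(a\vee c)\le b\le b\vee(a\wedge c)$, these coincide exactly when $a\wedge c\le b\le a\vee c$, i.e.\ when $b$ is the median of $a,b,c$, which is precisely the condition that the triple be monotone. Since $2$-permutability of a symmetric function amounts to the single identity $\vee\wedge f=\wedge\vee f$, it holds iff every consecutive triple of $\alpha$ is monotone, i.e.\ iff $f$ is $2$-locally monotone.

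For the inductive step ($p\ge 3$) I would first record two consequences of $p$-permutability. Since $p$-permutability implies $2$-permutability (Theorem~\ref{thm (p+1)-permutable ==> p-permutable}), the base case gives that every consecutive triple of $\alpha$ is monotone. Second, viewing $\wedge f$ and $\vee f$ as $(n-1)$-ary symmetric functions, the sequence description makes it immediate that they are $(p-1)$-permutable (a word of $p-1$ operators applied to $\wedge f$ is that word followed by one more $\wedge$ applied to $f$), so the induction hypothesis yields that $\wedge f$ and $\vee f$ are $(p-1)$-locally monotone; equivalently, every length-$p$ block of $\gamma$ and of $\delta$ is monotone. Now suppose $f$ were not $p$-locally monotone; then some block $\alpha_i,\dots,\alpha_{i+p}$ is not monotone. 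A sequence that is not monotone but all of whose consecutive triples are monotone must contain an interior strict local extremum plateau, i.e.\ a maximal constant run $[u,v]$ with $i<u\le v<i+p$ whose two neighbours lie strictly on the same side; and since a run of length one would create a non-monotone triple, this run has length at least two. In the maximum-plateau case, with plateau value $M$, one checks that $\gamma$ equals $\alpha_{u-1},M,\dots,M,\alpha_{v+1}$ on $[u-1,v]$ with $\alpha_{u-1},\alpha_{v+1}<M$, so the length-$p$ block $\gamma_i,\dots,\gamma_{i+p-1}$ rises and then falls and is not monotone, contradicting $(p-1)$-local monotonicity of $\wedge f$; the minimum-plateau case is dual, using $\vee f$ (or one applies $f\mapsto -f$, which swaps the two lattice derivatives and preserves both hypotheses).

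The step needing the most care is this structural claim in the inductive step: that a block which is not monotone yet has all triples monotone necessarily exhibits an interior extremum plateau of length $\ge 2$, and that taking the appropriate one of $\wedge f,\vee f$ converts it into a genuine non-monotonicity sitting inside a single length-$p$ block. Verifying the plateau computation for $\gamma$ (resp.\ $\delta$) and checking the index bounds $u-1\ge i$ and $v\le i+p-1$, so that the offending block is indeed one controlled by $(p-1)$-local monotonicity, is where the bookkeeping must be done; everything else is either quoted from earlier results or reduces to the short median computation of the base case.
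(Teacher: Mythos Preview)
Your proof is correct and follows a genuinely different route from the paper's. Both arguments hinge on the same structural fact---that a non-monotone block of $\alpha$ whose consecutive triples are all monotone must contain an interior constant plateau $\beta,\gamma,\dots,\gamma,\delta$ with $\beta,\delta<\gamma$ (or dually)---but they exploit it differently. The paper proceeds directly: it takes a \emph{shortest} non-monotone subsequence, derives the plateau shape from minimality, and then explicitly computes $\vee\wedge^{\ell-1}f$ versus $\wedge^{\ell-1}\vee f$ at the relevant position to exhibit a concrete failure of permutability. You instead argue by induction on $p$: from $p$-permutability of $f$ you deduce $(p-1)$-permutability of $\wedge f$ and $\vee f$, apply the induction hypothesis to get their $(p-1)$-local monotonicity, and then show that the plateau in $\alpha$ produces a non-monotone length-$p$ block in $\gamma$ (or $\delta$), a contradiction. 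The paper's approach is more self-contained and yields the explicit pair of non-commuting derivative words; your inductive reduction is cleaner conceptually and avoids the tabular computation, at the cost of invoking Theorem~\ref{thm (p+1)-permutable ==> p-permutable} and keeping track of the index bounds $u-1\ge i$, $v\le i+p-1$ (which you correctly flag as the main bookkeeping). Your base case $p=2$ via the median identity is also a nice alternative to the paper's forbidden-section argument (Theorem~\ref{thm Boolean 2-local iff 2-permutable} is Boolean-only, so something like your computation is indeed needed here).
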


\begin{proof}
By Theorem~\ref{thm p-local ==> p-permutable}, it is enough to show that if a symmetric function $f\colon \mathbb{B}^{n}\rightarrow \mathbb{R}$
is not $p$-locally monotone, then it does not have $p$-permutable lattice derivatives. So suppose that $f$ is a symmetric function which is not
$p$-locally monotone, and which corresponds to the sequence $\alpha =\alpha_{0},\ldots ,\alpha_{n}$. Let $\alpha_{i},\dots ,\alpha _{i+\ell}$ be
a shortest subsequence of $\alpha$ that is neither nondecreasing nor nonincreasing. Proposition~\ref{prop symmetric p-local via sections}
implies that there is indeed such a subsequence for $\ell +1\leq p+1$. From the minimality of $\ell $ it follows that the subsequence
$\alpha_{i},\dots ,\alpha_{i+\ell -1}$ is either nondecreasing or nonincreasing. We may assume without loss of generality that the first case
holds; the second case is the dual of the first one. Then we must have $\alpha_{i+\ell -1}>\alpha_{i+\ell}$, since otherwise the whole
subsequence $\alpha_{i},\dots ,\alpha_{i+\ell }$ would be nondecreasing. Thus we have the following inequalities:
\begin{equation}\label{eq alpha nondecreasing}
\alpha _{i}~\leq ~\alpha_{i+1}~\leq ~\cdots ~\leq ~\alpha_{i+\ell -1} ~>~\alpha _{i+\ell }{\,}.
\end{equation}
From the minimality of $\ell$, we can also conclude that the subsequence $ \alpha _{i+1},\dots ,\alpha _{i+\ell}$ is either nondecreasing or
nonincreasing. As $\alpha _{i+\ell -1}>\alpha _{i+\ell}$, the first case is impossible, therefore $\alpha _{i+1},\dots ,\alpha _{i+\ell }$ is
nonincreasing, and we must have $\alpha _{i}<\alpha _{i+1}$ since otherwise the whole subsequence $\alpha _{i},\dots ,\alpha _{i+\ell }$ would
be nonincreasing:
\begin{equation}\label{eq alpha nonincreasing}
\alpha_{i} ~<~\alpha_{i+1}~\geq ~\cdots ~\geq ~\alpha_{i+\ell -1}~\geq ~\alpha_{i+\ell }{\,}.
\end{equation}

Comparing (\ref{eq alpha nondecreasing}) and (\ref{eq alpha nonincreasing}), we obtain
$$
\alpha_{i}~<~\alpha_{i+1} ~=~\cdots ~=~\alpha_{i+\ell -1}~>~\alpha_{i+\ell }{\,}.
$$
To simplify notation, we set $\beta :=\alpha _{i},\gamma :=\alpha _{i+1},\delta :=\alpha _{i+\ell }$. With this notation we have that $\alpha$
contains the subsequence $\beta ,\gamma ,\ldots ,\gamma ,\delta $ of length $\ell +1$ with $\beta ,\delta <\gamma $. In the following we will
use this observation to prove that $f$ does not have $\ell $-permutable lattice derivatives.

Let us compute the sequence corresponding to $\vee \wedge^{\ell -1}f$. We can construct a table as in Example~\ref{ex f p-local Vf (p-1)-local},
but this time the table has $\ell +1$ rows (in the last row $\mu $ stands for $\beta \vee \delta $):
$$
\begin{tabular}{rp{0.4cm}@{}p{0.4cm}@{}p{0.4cm}@{}p{0.4cm}@{}p{0.4cm}@{}p{0.4cm}@{}p{0.4cm}@{}p{0.4cm}@{}p{0.4cm}@{}p{0.4cm}@{}p{0.4cm}@{}p{0.4cm}@{}p{0.4cm}@{}p{0.4cm}@{}p{0.4cm}@{}p{0.4cm}@{}p{0.4cm}@{}p{0.4cm}@{}p{0.4cm}@{}p{0.4cm}@{}p{0.4cm}@{}p{0.4cm}@{}p{0.4cm}@{}p{0.4cm}@{}p{0.4cm}}
$f:$ & $\alpha _{0}$ &  & $\cdots $ &  & $\beta $ &  & $\gamma $ &  & $%
\gamma $ &  & $\gamma $ &  & $\cdots $ &  & $\gamma $ &  & $\gamma $ &  & $%
\gamma $ &  & $\delta $ &  & $\cdots $ &  & $\alpha _{n}$ \\
$\wedge f:$ &  &  &  & $\cdots $ &  & $\beta $ &  & $\gamma $ &  & $\gamma $
&  &  & $\cdots $ &  &  & $\gamma $ &  & $\gamma $ &  & $\delta $ &  & $%
\cdots $ &  &  &  \\
$\wedge ^{2}f:$ &  &  &  &  & $\cdots $ &  & $\beta $ &  & $\gamma $ &  &  & & $\cdots $ &  &  &  & $\gamma $ &  & $\delta $ &  & $\cdots $ &  &
&  &
\\
&  &  &  &  &  &  & $\cdots $ &  & $\cdots $ &  & $\cdots $ &  & $\cdots $ &
& $\cdots $ &  & $\cdots $ &  & $\cdots $ &  &  &  &  &  &  \\
$\wedge ^{\ell -2}f:$ &  &  &  &  &  &  &  &  & $\cdots $ &  & $\beta $ &  &
$\gamma $ &  & $\delta $ &  & $\cdots $ &  &  &  &  &  &  &  &  \\
$\wedge ^{\ell -1}f:$ &  &  &  &  &  &  &  &  &  & $\cdots $ &  & $\beta $ &
& $\delta $ &  & $\cdots $ &  &  &  &  &  &  &  &  &  \\
$\vee \wedge ^{\ell -1}f:$ &  &  &  &  &  &  &  &  &  &  & $\cdots $ &  & $%
\mu $ &  & $\cdots $ &  &  &  &  &  &  &  &  &  &
\end{tabular}
$$
A similar table can be constructed for $\wedge^{\ell -1}\vee f$:
$$
\begin{tabular}{rp{0.4cm}@{}p{0.4cm}@{}p{0.4cm}@{}p{0.4cm}@{}p{0.4cm}@{}p{0.4cm}@{}p{0.4cm}@{}p{0.4cm}@{}p{0.4cm}@{}p{0.4cm}@{}p{0.4cm}@{}p{0.4cm}@{}p{0.4cm}@{}p{0.4cm}@{}p{0.4cm}@{}p{0.4cm}@{}p{0.4cm}@{}p{0.4cm}@{}p{0.4cm}@{}p{0.4cm}@{}p{0.4cm}@{}p{0.4cm}@{}p{0.4cm}@{}p{0.4cm}@{}p{0.4cm}}
$f:$ & $\alpha _{0}$ &  & $\cdots $ &  & $\beta $ &  & $\gamma $ &  & $%
\gamma $ &  & $\gamma $ &  & $\cdots $ &  & $\gamma $ &  & $\gamma $ &  & $%
\gamma $ &  & $\delta $ &  & $\cdots $ &  & $\alpha _{n}$ \\
$\vee f:$ &  &  &  & $\cdots $ &  & $\gamma $ &  & $\gamma $ &  & $\gamma $
&  &  & $\cdots $ &  &  & $\gamma $ &  & $\gamma $ &  & $\gamma $ &  & $%
\cdots $ &  &  &  \\
$\wedge \vee f:$ &  &  &  &  & $\cdots $ &  & $\gamma $ &  & $\gamma $ &  & &  & $\cdots $ &  &  &  & $\gamma $ &  & $\gamma $ &  & $\cdots $ &
&  &  &
\\
&  &  &  &  &  &  & $\cdots $ &  & $\cdots $ &  & $\cdots $ &  & $\cdots $ &
& $\cdots $ &  & $\cdots $ &  & $\cdots $ &  &  &  &  &  &  \\
$\wedge ^{\ell -3}\vee f:$ &  &  &  &  &  &  &  &  & $\cdots $ &  & $\gamma $
&  & $\gamma $ &  & $\gamma $ &  & $\cdots $ &  &  &  &  &  &  &  &  \\
$\wedge ^{\ell -2}\vee f:$ &  &  &  &  &  &  &  &  &  & $\cdots $ &  & $%
\gamma $ &  & $\gamma $ &  & $\cdots $ &  &  &  &  &  &  &  &  &  \\
$\wedge ^{\ell -1}\vee f:$ &  &  &  &  &  &  &  &  &  &  & $\cdots $ &  & $%
\gamma $ &  & $\cdots $ &  &  &  &  &  &  &  &  &  &
\end{tabular}
$$
Since $\beta ,\delta <\gamma $, we have $\mu <\gamma $, and this means that the sequences corresponding to $\vee \wedge ^{\ell -1}f$ and $\wedge
^{\ell -1}\vee f$ differ in at least one position, therefore $f$ does not have $\ell $-permutable lattice derivatives. As $\ell \leq p$, this
implies that $f $ does not have $p$-permutable lattice derivatives either, according to Theorem~\ref{thm (p+1)-permutable ==> p-permutable}.
\end{proof}

\begin{remark}\label{rem symmetric exactly p-permutable}
As a consequence of Theorem~\ref{thm symmetric p-local iff p-permutable}, we can observe that any exactly $p$-locally monotone symmetric
function (for instance, the functions considered in Example~\ref{ex f p-local Vf (p-1)-local}) has $p$-permutable but not $(p+1)$-permutable
lattice derivatives.
\end{remark}

\section{Open problems and concluding remarks}

We proposed relaxations of monotonicity, namely $p$-local monotonicity, and we presented characterizations of each in terms of ``forbidden''
sections. Also, for each $p$, we observed that $p$-locally monotone functions have the property that any $p$ of their lattice derivatives permute,
and showed that the converse also holds in the special case of symmetric functions. The classes of $2$-locally monotone functions, and of
functions having $2$-permutable lattice derivatives were explicitly described. However,
similar descriptions elude us for $p\geq 3$. Hence we are left with the following problems.

\begin{problem}
For $p\geq 3$, describe the class of $p$-locally monotone functions and that of functions having $p$-permutable lattice derivatives.
\end{problem}

\begin{problem}
For $p\geq 3$, determine necessary and sufficient conditions on functions for the equivalence between $p$-local monotonicity and $p$-permutability of lattice derivatives.
\end{problem}

\section*{Acknowledgments}

Miguel Couceiro and Jean-Luc Marichal are supported by the internal research project F1R-MTH-PUL-12RDO2 of the University of Luxembourg.
Tam\'as Waldhauser is supported by the \hbox{T\'{A}MOP-4.2.1/B-09/1/KONV-2010-0005} program of National
Development Agency of Hungary, by the Hungarian National Foundation for
Scientific Research under grant nos K77409 and K83219, by the National
Research Fund of Luxembourg, and by the Marie Curie Actions of
the European Commission \hbox{(FP7-COFUND).}

\end{document}